\documentclass[10pt, aps, prd, amsmath, amssymb, amsfonts, floats, floatfix, twocolumn, notitlepage,
superscriptaddress, nofootinbib, showpacs, longbibliography]{revtex4-2}

\usepackage{bm}
\usepackage{tikz}
\usepackage{xcolor}
\usepackage{dcolumn}
\usepackage{enumerate}
\usepackage{epsfig}
\usepackage{graphicx}
\usepackage[utf8]{inputenc}
\usepackage{latexsym}
\usepackage{rotating}
\usepackage{orcidlink}
\usetikzlibrary{arrows.meta, positioning}
\usepackage[normalem]{ulem}
\usepackage{soul}
\usepackage[htt]{hyphenat}
\usepackage[caption=false]{subfig}
\usepackage{booktabs}
\usepackage{multirow}
\usepackage{amsthm}

\newtheorem{proposition}{Proposition}

\newtheorem{corollary}{Corollary}

\theoremstyle{definition}

\theoremstyle{remark}
\newtheorem{remark}{Remark}

\PassOptionsToPackage{colorlinks=true,linkcolor=black,citecolor=black,urlcolor=black}{hyperref}

\begin{document}

\title{Oppenheimer-Snyder Collapse in $f(R)$ Gravity : Stalemate or Resolution?}

\author{Soumya Chakrabarti \orcidlink{0000-0001-7255-9303} }\email{soumya.chakrabarti@vit.ac.in}
\affiliation{Department of Physics, School of Advanced Sciences, Vellore Institute of Technology, Vellore, Tiruvalam Rd, Katpadi, Tamil Nadu 632014, India}

\author{Apratim Ganguly \orcidlink{0000-0001-7394-0755}}\email{apratim@iucaa.in}
\affiliation{Inter-University Centre for Astronomy and Astrophysics (IUCAA), Post Bag 4, Ganeshkhind, Pune 411007, India}

\author{Radouane Gannouji \orcidlink{0000-0003-0749-7593}}\email{radouane.gannouji@pucv.cl}
\affiliation{Instituto de Física, Pontificia Universidad Católica de Valparaíso, Avenida Universidad 330, Valparaíso, Chile}

\author{Chiranjeeb Singha \orcidlink{0000-0003-0441-318X}} \email{chiranjeeb.singha@iucaa.in}
\affiliation{Inter-University Centre for Astronomy and Astrophysics (IUCAA), Post Bag 4, Ganeshkhind, Pune 411007, India}

\date{\today}

\begin{abstract}
We study the Oppenheimer--Snyder (OS) collapse problem in metric $f(R)$ gravity by matching a homogeneous dust Friedmann--Lema\^itre--Robertson--Walker (FLRW) interior to a generalized Vaidya exterior across a timelike hypersurface. In metric $f(R)$ gravity, regular matching requires the continuity not only of the induced metric and extrinsic curvature, but also of the Ricci scalar and its normal derivative. These additional conditions generically exclude the usual Ricci-flat exteriors, such as the Schwarzschild solution. We show that, for an unrestricted generalized Vaidya exterior, the matching conditions fix the boundary data but do not uniquely determine the bulk extension, leaving open the possibility of a physical resolution of the collapse problem. However, once the exterior matter content is restricted to the generalized Vaidya form, the field equations impose a strong constraint, forcing $f_{,R}$ to be linear in the areal radius, $f_{,R}=A(v)\,r+B(v)$. 
%Here, comma denotes partial differentiation, e.g., $f_{,R} \equiv \partial_R f(R)$. 
For locally invertible $f_{,R}$ with $f_{,RR}\neq 0$, this sharply reduces the admissible class of exteriors, so that the matching data uniquely determine the exterior solution on each interval where the boundary map is locally invertible. We further show that, for generic viable $f(R)$ models, the branch with $A(v)\neq 0$ does not admit a global extension with finite asymptotic curvature, while the branch $A(v)=0$ places the interior on a constant-curvature sector. This excludes nontrivial dust collapse, although it does not rule out collapse for more general interior matter with constant trace. Thus, generalized Vaidya exteriors reopen the collapse problem at a formal level, but within the restricted matter sector considered here, the OS dust collapse problem remains unresolved and the physically acceptable branch is highly constrained.
\end{abstract}

\maketitle

\section{Introduction}
\label{sec:intro}
The Oppenheimer–Snyder (OS) model remains the canonical idealized description of gravitational collapse, a homogeneous dust-filled FLRW interior matched across a timelike comoving hypersurface to an exterior Schwarzschild geometry~\cite{Oppenheimer:1939ue}. In general relativity (GR), this construction is governed by the Darmois--Israel junction conditions, namely the continuity of the first and second fundamental forms across the matching surface~\cite{Darmois1927,Israel:1966rt}. More general radiating configurations are commonly described by Vaidya-type exteriors, and their matching to collapsing interiors has long served as a standard framework for dissipative collapse~\cite{Vaidya:1951fdr,Santos:1985inc}. In particular, generalized Vaidya spacetimes provide a natural extension in which the mass function depends on both the null coordinate and the areal radius, thereby accommodating a broader class of matter sources than pure null dust~\cite{Husain:1995bf,Wang:1998qx}.

The situation is considerably more restrictive in metric $f(R)$ gravity, one of the simplest higher-curvature extensions of GR, widely studied in cosmology, astrophysics, and inflationary model building \cite{Starobinsky:1980te,Sotiriou:2008rp,DeFelice:2010aj,Clifton:2011jh}. In addition to the usual spin-2 graviton, metric $f(R)$ theories propagate an additional scalar degree of freedom (the scalaron), reflecting their equivalence to scalar--tensor theories~\cite{Teyssandier:1983zz}. The viability of such theories requires standard conditions such as $f_{,R} > 0$ and $f_{,RR} > 0$, together with the absence of tachyonic instabilities. Because the field equations are fourth order in the metric, the matching problem is no longer governed solely by the Darmois--Israel conditions. For regular matching with $f_{,RR} \neq 0$, one must also impose the continuity of the Ricci scalar and of its normal derivative across the hypersurface~\cite{Deruelle:2007pt,Senovilla:2013vra}. Physically, these additional conditions reflect the requirement that the scalar degree of freedom and its flux remain continuous across the matching surface.

These extra junction conditions severely constrain collapse scenarios. In particular, they obstruct the naive OS construction in metric $f(R)$ gravity whenever one attempts to match a nontrivial homogeneous interior directly to a Ricci-flat exterior~\cite{Senovilla:2013vra,Goswami:2014lxa,Casado-Turrion:2022xkl}. This obstruction is not merely technical but reflects a fundamental incompatibility between the dynamics of the scalar sector and the assumed exterior geometry. A closely related difficulty has also been identified in the context of static compact objects, where the same junction conditions constrain viable neutron star solutions in specific models such as the Starobinsky model~\cite{Ganguly:2013taa}. This raises a natural question: can relaxing the exterior solution from the standard Vaidya metric to the generalized Vaidya class restore sufficient functional freedom to evade this obstruction, or does the higher-order structure of $f(R)$ gravity reimpose a hidden rigidity? This question is particularly timely in light of recent analyses emphasizing that the standard OS collapse is generically incompatible with the stricter matching conditions and that more general exteriors must be considered~\cite{Casado-Turrion:2022xkl}.

In this work, we address this question by matching a homogeneous FLRW interior to a generalized Vaidya exterior in metric $f(R)$ gravity. Our analysis leads to the following main results. First, without imposing restrictions on the exterior matter content, the matching conditions determine only the restriction of the mass function on the matching hypersurface and do not fix its extension into the bulk, resulting in a non-uniqueness of exterior solutions. Second, once the exterior matter content is restricted to the generalized Vaidya form, the field equations impose a strong rigidity condition: the derivative $f_{,R}$ must be linear in the areal radius, $f_{,R} = A(v)\, r + B(v)$. Third, for locally invertible $f_{,R}$ with $f_{,RR} \neq 0$, this constraint reduces the admissible class of exteriors so drastically that the matching data determine the exterior solution uniquely on each interval where the boundary map is locally invertible. 

However, this mathematical resolution comes at a physical cost. We show that the branch with $A(v) \neq 0$ generically leads to an unbounded curvature and divergent matter variables at large radius, rendering it physically unacceptable as an exterior describing an isolated collapsing object. On the other hand, the branch $A(v) = 0$ places the interior on a constant-curvature sector, which enforces a constant trace of the energy--momentum tensor and excludes nontrivial dust collapse. While this branch does not forbid collapse altogether, it restricts the admissible interior matter content to a narrow class of constant-trace sources.

The resulting picture is therefore more subtle than a simple no-go theorem. Generalized Vaidya exteriors reopen the collapse problem at a formal level by introducing functional freedom in the exterior geometry. However, once physically motivated restrictions on the matter sector are imposed, this freedom collapses into a highly constrained structure, and the standard OS dust collapse remains incompatible with metric $f(R)$ gravity. This suggests that any physically viable realization of gravitational collapse in these theories may require going beyond the generalized Vaidya framework and considering more general exterior matter configurations.

\section{Metric $f(R)$ gravity}
\label{sec:model}

We consider metric $f(R)$ gravity, defined by the action
\begin{align}
\label{eq:SfR}
S=\frac{1}{2\kappa}\int d^4x\,\sqrt{-g}\,f(R)+S_m\,,
\quad \kappa=8\pi G .
\end{align}
Varying Eq.~\eqref{eq:SfR} with respect to the metric yields the field equations
\begin{align}
f_{,R}R_{\mu\nu}-\frac12 f(R)g_{\mu\nu}
+ \left(g_{\mu\nu}\Box - \nabla_\mu \nabla_\nu\right) f_{,R}
=\kappa\,T_{\mu\nu},
\label{eq:fr_field}
\end{align}
where
\begin{align}
T_{\mu\nu}\equiv -\frac{2}{\sqrt{-g}}\frac{\delta S_m}{\delta g^{\mu\nu}}\,.
\end{align}

Eq.~\eqref{eq:fr_field} implies that, in addition to the spin-2 graviton, metric $f(R)$ gravity propagates an additional scalar degree of freedom encoded in $f_{,R}$. This can be made explicit by considering the trace of the field equations,
\begin{align}
3\Box f_{,R}+f_{,R}\,R-2f(R)=\kappa T,
\label{eq:trace_DK}
\end{align}
which governs the dynamics of the scalaron. Metric $f(R)$ gravity is dynamically equivalent to a scalar--tensor theory in which $f_{,R}$ plays the role of an effective scalar field.

To analyze the propagating degrees of freedom, we linearize the theory around a background metric $\bar g_{\mu\nu}$,
\begin{align}
g_{\mu\nu}=\bar g_{\mu\nu}+h_{\mu\nu},
\end{align}
and decompose the perturbation into its transverse-traceless (TT) component $h^{\text{TT}}_{\mu\nu}$ and the remaining modes. The quadratic action for the TT spin-2 sector takes the schematic form
\begin{align}
S^{(2)}_{\text{spin-2}}
=\frac{f_{,R}(\bar R)}{8\kappa}\int d^4x\,\sqrt{-\bar g}\;
h^{\mu\nu}_{\text{TT}}\,\mathcal{E}\,h^{\text{TT}}_{\mu\nu}
+\cdots,
\label{eq:spin2_quad}
\end{align}
where $\bar R$ is the background scalar curvature and $\mathcal{E}$ denotes the usual second-order Einstein kinetic operator. Requiring a positive kinetic term for the spin-2 graviton leads to the first viability condition
\begin{align}
f_{,R}>0.
\end{align}

The scalar degree of freedom is governed by Eq.~\eqref{eq:trace_DK}. Perturbing the curvature around background curvature scalar $\bar R$ in a region filled with matter, one obtains at leading order a Klein--Gordon type equation
\begin{align}
\left(\Box - M^2(\bar R)\right)\delta R = \frac{\kappa}{3 f_{,RR}(\bar R)}\,\delta T,
\label{eq:KG_DK}
\end{align}
where the effective mass squared is given by
\begin{align}
M^2(\bar R)=\frac{1}{3}\left(\frac{f_{,R}(\bar R)}{f_{,RR}(\bar R)}-\bar R\right).
\label{eq:M2_DK}
\end{align}
The absence of tachyonic instabilities requires \cite{Starobinsky:1980te,Muller:1987hp}
\begin{align}
M^2(R)=\frac{f_{,R}-Rf_{,RR}}{3f_{,RR}}>0.
\label{eq:no_tachyon}
\end{align}
A further constraint arises from the Dolgov--Kawasaki instability \cite{Dolgov:2003px}, which reflects the possibility of rapid growth of curvature perturbations $\delta R$ in matter if the scalar sector has the wrong sign. Avoiding this instability requires from Eq. \eqref{eq:KG_DK}
\begin{align}
f_{,RR}>0.
\label{eq:DK_condition}
\end{align}

In summary, viable metric $f(R)$ models must satisfy the conditions
\begin{align}
f_{,R}>0\,,\qquad f_{,RR}>0\,,\qquad f_{,R}-R f_{,RR}>0.
\end{align}
These conditions ensure a positive, effective gravitational coupling, stability of the scalar sector, and the absence of tachyonic modes.

\section{Junction conditions}
\label{sec:junc-cond}

Let $\Sigma$ be a timelike hypersurface separating an interior region ($-$) from an exterior region ($+$). In a neighborhood of $\Sigma$, we introduce Gaussian normal coordinates $(y,\xi^i)$, where $y \equiv \xi^0$ measures the proper distance along the spacelike unit normal $n^\mu$ to $\Sigma$, and $\xi^i$ ($i=1,2,3$) are intrinsic coordinates on $\Sigma$. The hypersurface is located at $y=0$, and the metric takes the form
\begin{align}
ds^2 = g_{ab}\,d\xi^a d\xi^b = dy^2 + \gamma_{ij}(y,\xi)\,d\xi^i d\xi^j,
\label{eq:gauss_metric}
\end{align}
where $\gamma_{ij}$ is the induced metric (first fundamental form) on the constant-$y$ hypersurfaces, and in particular $\gamma_{ij}\big|_{y=0}$ is the intrinsic metric on $\Sigma$.

The extrinsic curvature (second fundamental form) of $\Sigma$ is defined by
\begin{align}
K_{ij} \equiv \gamma_i{}^{\mu}\gamma_j{}^{\nu}\nabla_\mu n_\nu,
\label{eq:K_def}
\end{align}
where $\gamma_i{}^{\mu}$ is the projector onto $\Sigma$. For an embedding $x^\alpha = x^\alpha(\xi^i)$, this can be written as
\begin{align}
K_{ij}
=-n_{\alpha}\,\Bigl(\frac{\partial^2 x^\alpha}{\partial\xi^i\partial\xi^j}
+\Gamma^{\alpha}{}_{\beta\gamma}
\frac{\partial x^\beta}{\partial\xi^i}\frac{\partial x^\gamma}{\partial\xi^j}\Bigr)\,.
\label{eq:K_coords}
\end{align}

We restrict attention to regular matchings across $\Sigma$, i.e., configurations without distributional sources localized on the hypersurface. In metric $f(R)$ gravity, and assuming $f_{,RR}\neq 0$ on $\Sigma$, the junction conditions take the form \cite{Deruelle:2007pt,Senovilla:2013vra}
\begin{align}
\label{eq:junc1}
\big[\gamma_{ij}\big]^+_- &= 0,\\
\label{eq:junc2}
\big[K_{ij}\big]^+_- &= 0,\\
\label{eq:junc3}
\big[R\big]^+_- &= 0,\\
\label{eq:junc4}
\big[n^\mu\nabla_\mu R\big]^+_- &= 0,
\end{align}
where $[X]^+_- \equiv X^+|_{\Sigma}-X^-|_{\Sigma}$ denotes the jump of any quantity across $\Sigma$, and $n^\mu \nabla_\mu R = \partial_y R$ in Gaussian normal coordinates.

The first two conditions, Eqs.~\eqref{eq:junc1} and \eqref{eq:junc2}, coincide with the standard Darmois--Israel junction conditions of GR~\cite{Darmois1927,Israel:1966rt}. The new requirements in metric $f(R)$ gravity are the continuity of the Ricci scalar and of its normal derivative, Eqs.~\eqref{eq:junc3} and \eqref{eq:junc4}. These arise because the field equations \eqref{eq:fr_field} contain second derivatives of $f(R)$ and therefore involve, in general, fourth derivatives of the metric.

From a physical perspective, these additional conditions reflect the presence of the scalar degree of freedom: regular matching requires not only the continuity of the geometry but also the continuity of the scalaron and of its normal flux across the hypersurface. As a consequence, the space of admissible matchings in metric $f(R)$ gravity is significantly more constrained than in GR, and configurations that are allowed in GR may fail to satisfy these stricter conditions.

\section{Generalized Vaidya exteriors}
\label{sec:gen-vaidya}

The Vaidya spacetime \cite{Vaidya:1951fdr} and its generalizations \cite{Wang:1998qx,Husain:1995bf} are a special case of a more general spherically symmetric line element
\begin{align}
ds^2
&= -e^{2\psi (v,r)}\left[1-\frac{2M(v,r)}{r}\right ]dv^2
+2\,\epsilon\, e^{\psi (v,r)}dv\,dr \nonumber\\
&\quad + r^2 d\Omega^2,
\qquad (\epsilon=\pm 1),
\label{gg-vaidya}
\end{align}
which can accommodate a combination of Type-I and Type-II matter fields. Indeed, following the Hawking–Ellis algebraic classification of the stress-energy tensor $T^{a}{}_{b}$ \cite{Hawking:1973uf}, a \emph{Type-I} matter field is one for which $T^{a}{}_{b}$ admits a timelike eigenvector $u^{a}$ (and three spacelike eigenvectors). Equivalently, there exists an orthonormal frame in which $T^a{}_{b}$ is diagonal,
\begin{align}
T^a{}_{b}=\mathrm{diag}\!\left(-\rho,\,p_1,\,p_2,\,p_3\right),
\end{align}
so that the matter possesses a local rest frame in which the energy flux vanishes, as in the case of dust or perfect fluids. By contrast, a \emph{Type-II} matter field is characterized by the existence of a \emph{repeated null} eigenvector $k^{a}$ (a double principal null direction), so that $T^{a}{}_{b}$ is not fully diagonalizable in an orthonormal frame. Physically, this corresponds to intrinsically radiative flow with no rest frame that eliminates the energy flux; the canonical example is null dust,
\begin{align}
T_{ab}=\Phi\,k_a k_b, \qquad k^a k_a=0,
\end{align}
see \cite{Hawking:1973uf,Martin-Moruno:2018eil}.

In the metric \eqref{gg-vaidya}, the function $M(v,r)$ acts as a generalized mass function, representing the gravitational energy enclosed within a radius $r$. For $\epsilon=+1$, the null coordinate $v$ denotes the Eddington advanced time, with $r$ decreasing along ingoing null rays at constant $v$. Conversely, $\epsilon=-1$ corresponds to an outgoing null congruence. The condition $\psi(v,r)=0$ selects a particular subclass leading to the generalized Vaidya geometry \cite{Wang:1998qx,Husain:1995bf}. We choose $\epsilon=+1$ and adopt the corresponding form of the metric,
\begin{align}
ds^2
= -\left(1-\frac{2M(v,r)}{r}\right )dv^2
+2dvdr
+r^2 d\Omega^2 .
\label{line-element}
\end{align}
Choosing two null vectors $l_{\mu}$ and $n_{\mu}$ such that
\begin{align}
l_{\mu} &= \delta^0_{\mu}, \qquad
n_{\mu}=\frac{1}{2}\left[1-\frac{2M(v,r)}{r}\right]\delta^0_{\mu}-\delta^1_{\mu}, \nonumber\\
l_{\mu}l^{\mu} &= n_{\mu}n^{\mu}=0, \qquad
l_{\mu}n^{\mu}=-1,
\label{normalization}
\end{align}
and assuming general relativity, one can define the corresponding energy-momentum tensor associated to this spacetime, in the form \cite{Lake:1991bff,Wang:1998qx}
\begin{align}
T_{\mu\nu} &= T^{(n)}_{\mu\nu}+T^{(m)}_{\mu\nu},
\label{EMT1}\\
T^{(n)}_{\mu\nu} &= \mu\, l_{\mu}l_{\nu},
\label{EMT2}\\
T^{(m)}_{\mu\nu} &= (\rho+p)(l_{\mu}n_{\nu}+l_{\nu}n_{\mu})+p\, g_{\mu\nu}.
\label{EMT3}
\end{align}
In this setup, the Einstein field equations yield
\begin{align}
\label{GRV1}
\mu &= \frac{2}{\kappa r^2}\frac{\partial M(v,r)}{\partial v}\,,\\
\label{GRV2}
\rho &=\frac{2}{\kappa r^2}\frac{\partial M(v,r)}{\partial r}\,,\\
\label{GRV3}
p &= -\frac{1}{\kappa r}\frac{\partial^{2} M(v,r)}{\partial r^2}.
\end{align}
An energy-momentum tensor written in this form is commonly employed to represent the source supported by a generalized Vaidya metric. The term $T^{(n)}_{\mu\nu}$ represents null radiation propagating along the hypersurfaces $v=\mathrm{const}$ and therefore of type-II, while $T^{(m)}_{\mu\nu}$ is the Type-I, non-null part of the source and may be interpreted as an anisotropic matter sector \cite{Mkenyeleye:2014dwa,Brassel:2021mje}. The full tensor $T_{\mu\nu}=T^{(n)}_{\mu\nu}+T^{(m)}_{\mu\nu}$ is of Type-II whenever $\mu\neq0$. In this paper, we assume that the exterior spacetime is sourced by a fluid described by Eqs.~(\ref{EMT1}-\ref{EMT3}), without imposing a specific functional form as in Eqs.~(\ref{GRV1}–\ref{GRV3}).

\section{Matching FLRW Interior to a Generalized Vaidya Exterior}
\label{sec:match-flrw-gen-vaidya}

\subsection{Geometric Matching and Non-Uniqueness}
\label{subsec:nonuniqueness}

% \paragraph*{\textbf{First and Second Fundamental Forms}:}

We consider the matching of a spatially flat FLRW interior spacetime to a generalized Vaidya exterior across a timelike hypersurface $\Sigma$. The interior metric is
\begin{align}
 ds_{-}^2 = -dt^2 + a^2(t)\left(dr^2 + r^2 d\Omega^2\right),
 \label{inside metric}
\end{align}
while the exterior metric is taken to be of generalized Vaidya form
\begin{align}
 ds_{+}^2 = -\left(1 - \frac{2M(v,r_v)}{r_v}\right)dv^2 + 2\,dv\,dr_v + r_v^2 d\Omega^2,
 \label{outside metric}
\end{align}
where $M(v,r_v)$ is the generalized mass function.

The matching hypersurface $\Sigma$ is defined in the interior coordinates by $r=r_0$, and parametrically in the exterior coordinates as
\begin{align}
r_v = \mathcal{R}(t), \qquad v = \mathcal{T}(t).
\end{align}
The induced metrics on $\Sigma$ are
\begin{align}
 ds_-^2|_\Sigma &= -dt^2 + a^2(t)r_0^2 d\Omega^2, \\
 ds_+^2|_\Sigma &= -\left(V_\Sigma(t)\dot{\mathcal{T}}^2 - 2\dot{\mathcal{T}}\dot{\mathcal{R}}\right)dt^2 + \mathcal{R}(t)^2 d\Omega^2,
\end{align}
where dot represents derivative with respect to $t$, and 
\begin{align}
V_\Sigma(t) \equiv 1 - \frac{2M_\Sigma(t)}{\mathcal{R}(t)}, \qquad M_\Sigma(t) \equiv M(\mathcal{T}(t),\mathcal{R}(t)).
\end{align}
Matching the first fundamental form on $\Sigma$ yields
\begin{align}
&V_\Sigma \dot{\mathcal{T}}^2-2\dot{\mathcal{T}}\dot{\mathcal{R}} = 1, \label{eq:h00}\\
& \mathcal{R}(t)= r_0 a(t). \label{eq:h11}
\end{align}
Imposing continuity of the extrinsic curvature, we compute the unit normals to $\Sigma$. For the interior, the unit normal is
\begin{align}
n_-^\mu = \left(0,\; \frac{1}{a(t)},\; 0,\; 0\right),
\qquad
n^-_\mu = \left(0,\; a(t),\; 0,\; 0\right).
\end{align}
For the exterior, the tangent vector is
\begin{align}
u^\mu \equiv \frac{dx^\mu}{dt} = (\dot{\mathcal{T}},\dot{\mathcal{R}},0,0),
\end{align}
and defining
\begin{align}
\mathcal{R}'(t) \equiv \frac{d\mathcal{R}}{d\mathcal{T}}=\frac{\dot{\mathcal{R}}}{\dot{\mathcal{T}}},
\end{align}
a convenient outward-pointing unit normal (normalized with respect to the exterior metric) is
\begin{align}
n_+^\mu =
\left(\frac{1}{\sqrt{V_\Sigma-2\mathcal{R}'}},\;
\frac{V_\Sigma-\mathcal{R}'}{\sqrt{V_\Sigma-2\mathcal{R}'}},\;0,\;0\right).
\end{align}
For the interior, one finds
\begin{align}
K^-_{tt} &= 0, \\
K^-_{\theta\theta} &= a(t)\,r_0, \\
K^-_{\varphi\varphi} &= a(t)\,r_0\,\sin^2\theta.
\end{align}
and for the exterior,
\begin{align}
&K^+_{tt}
= -\frac{1}{\sqrt{V_\Sigma-2\mathcal{R}'}}
\Bigl[
\ddot{\mathcal{R}} - \mathcal{R}'\ddot{\mathcal{T}} \nonumber\\
& ~~+\dot{\mathcal{T}}^2\Bigl((V_\Sigma-3\mathcal{R}')
\Bigl(\frac{M_\Sigma}{\mathcal{R}^2}-\frac{1}{\mathcal{R}}M_{,r_v}\Bigr)
+\frac{1}{\mathcal{R}}M_{,v}\Bigr)
\Bigr]_\Sigma\,, \\
&K^+_{\theta\theta} = \mathcal{R}(t)\frac{V_\Sigma-\mathcal{R}'}{\sqrt{V_\Sigma-2\mathcal{R}'}}\,, \\
&K^+_{\varphi\varphi} = \mathcal{R}(t)\sin^2\theta\frac{V_\Sigma-\mathcal{R}'}{\sqrt{V_\Sigma-2\mathcal{R}'}}\,.
\end{align}
Matching $K_{\theta\theta}$ yields
\begin{align}
V_\Sigma = 1-\dot{\mathcal{R}}^2,\qquad 
\dot{\mathcal{T}} = \frac{1}{1-\dot{\mathcal{R}}}.
\end{align}
Imposing $K^+_{tt}=K^-_{tt}$ gives
\begin{align}
(1-\dot{\mathcal{R}})(1-2 \dot{\mathcal{R}})\left(\ddot{\mathcal{R}}+\frac{M_{\Sigma}}{\mathcal{R}^2}-\frac{M_{,r_v}}{\mathcal{R}} \right)+\frac{M_{, v}}{\mathcal{R}} =0.
\end{align}
Using the previous relations, this reduces to
\begin{align}
\left.M_{,r_v}\right|_{\Sigma}=\frac{M_{\Sigma}}{r_0 a}+r_0^2 a \ddot{a}.
\end{align}

The complete set of Darmois--Israel matching conditions is therefore
\begin{align}
\mathcal{R}(t)  &= r_0 a(t),\label{eq:defR} \\
\dot{\mathcal{T}}(t) &= \frac{1}{1-r_0 \dot a(t)}, \label{eq:defdotT}\\
M_\Sigma(t) &= \frac{r_0^3}{2}a(t)\dot a(t)^2, \label{eq:M-sigma}\\
\left.M_{,r_v}\right|_{\Sigma} &=r_0^2 \Bigl(\frac{1}{2}\dot a(t)^2+a(t) \ddot{a}(t)\Bigr) \label{eq:derM-sigma}.
\end{align}

\paragraph*{\textbf{Standard Vaidya Limit}:}

In the standard Vaidya case one has $\left.M_{, r_v}\right|_{\Sigma}=0$ which reduces Eq. \eqref{eq:derM-sigma} to
\begin{align}
a \ddot{a}+\frac{1}{2} \dot{a}^2=0,
\end{align}
with solution 
\begin{align}
a(t)=a(t_0)\Bigl(1+A(t-t_0)\Bigr)^{2/3} .
\end{align}
It then follows from Eqs.~(\ref{eq:defR}–\ref{eq:M-sigma}) that the evolution of the boundary $\Sigma$ and the interior scale factor are completely determined up to the integration constant $A$; that is, this constitutes a unique family of solutions. For $A<0$, the solution describes collapse.

\paragraph*{\textbf{Generalized Vaidya}:}

For the generalized Vaidya case, $M=M(v,r_v)$. Therefore, for a given interior dynamics $a(t)$, the matching conditions determine only 
\begin{align}
M_\Sigma(t),\qquad \left.M_{,r_v}\right|_{\Sigma}.
\end{align}
This information is specified along a one-dimensional curve in the $\left(v, r_v\right)$ plane and is therefore insufficient to uniquely reconstruct the two-variable function $M\left(v, r_v\right)$ in a neighborhood of $\Sigma$. Infinitely many extensions coincide on $\Sigma$ but differ away from it.

\begin{proposition}
\label{prop:non-uniqueness}
Let $\Sigma$ be the matching hypersurface parametrized by $t \mapsto (\mathcal{T}(t),\mathcal{R}(t))$ in the $(v,r_v)$ plane. 
Prescribing $M$ and finitely many of its partial derivatives on $\Sigma$ does not uniquely determine the function $M(v,r_v)$ in any neighborhood of $\Sigma$.
\end{proposition}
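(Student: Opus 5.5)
The proof is a counterexample by construction: I will show that any function $M$ compatible with the boundary data can be perturbed by a term that vanishes on $\Sigma$ to the prescribed order but not identically near $\Sigma$.

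\textbf{Step 1: a defining function for $\Sigma$.} Since the interior is collapsing, $\dot{\mathcal{R}}=r_0\dot a\neq 0$, and Eq.~\eqref{eq:defdotT} gives $\dot{\mathcal{T}}\neq 0$. Hence the curve $t\mapsto(\mathcal{T}(t),\mathcal{R}(t))$ is regular. Locally I write it as a graph $r_v=g(v)$, with $g=\mathcal{R}\circ\mathcal{T}^{-1}$, using that $\mathcal{T}$ is locally invertible. I then set $\sigma(v,r_v)\equiv r_v-g(v)$. This function vanishes exactly on $\Sigma$ and satisfies $\partial_{r_v}\sigma=1\neq 0$.

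\textbf{Step 2: the perturbation.} Suppose $M_1$ realizes the prescribed data, namely $M$ and all partial derivatives up to order $k$ on $\Sigma$ (for instance $M_\Sigma$ and $M_{,r_v}|_\Sigma$ from Eqs.~\eqref{eq:M-sigma}--\eqref{eq:derM-sigma}). Choose any smooth $h(v,r_v)$ that is nonvanishing on $\Sigma$; the simplest choice is $h=\epsilon$, a nonzero constant. Define
\begin{equation}
M_2 \equiv M_1 + h\,\sigma^{k+1}.
\end{equation}
By the Leibniz rule, every partial derivative of $\sigma^{k+1}h$ of order $j\le k$ is a sum of terms, each containing at least one factor of $\sigma$. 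All of these vanish on $\Sigma$, so $M_2$ carries the same data as $M_1$.

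\textbf{Step 3: $M_2$ differs from $M_1$ everywhere off $\Sigma$.} The difference is $M_2-M_1=h\,\sigma^{k+1}$. On each side of $\Sigma$ within any neighborhood, $\sigma\neq 0$, so this difference is nonzero at every point off $\Sigma$. Since $\epsilon$ is arbitrary, there are in fact infinitely many distinct extensions.

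To keep the exterior metric Lorentzian near $\Sigma$, I take $\epsilon$ small. If one insists on analyticity, $\sigma^{k+1}$ is analytic whenever $g$ is, so uniqueness fails in the analytic class as well. Prescribing infinitely many derivatives would not help in the smooth class either: one could add a flat function such as $e^{-1/\sigma^2}$. The statement, however, only concerns finite order.

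\textbf{Main obstacle.} The only real work is Step 1, constructing a regular defining function. It requires $\dot{\mathcal{T}}\neq 0$, which Eq.~\eqref{eq:defdotT} guarantees as long as $r_0\dot a\neq 1$, and which holds on the matching domain where $\dot{\mathcal{T}}$ is finite. If one wants a single global argument, one can instead use a bump function $\chi$ supported in a small ball centered near $\Sigma$ but away from it, and take $M_2=M_1+\epsilon\chi$. I will use the $\sigma^{k+1}$ construction because it works in every neighborhood of $\Sigma$.
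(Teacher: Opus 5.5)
Your proof is correct and is essentially the paper's argument. The paper passes to adapted coordinates $(s,n)$ with $\Sigma=\{n=0\}$, adds $n^{N+1}\psi(s,n)$, and transfers back to $(v,r_v)$ via the Jacobian; you instead use the defining function $\sigma=r_v-g(v)$ directly in $(v,r_v)$, which makes that coordinate change unnecessary, and your choice of $h$ nonvanishing on $\Sigma$ (e.g.\ a constant) is slightly cleaner, since it guarantees the perturbation is nonzero in every neighborhood of $\Sigma$, whereas an arbitrary nonzero $\psi$ could vanish near $\Sigma$ (the smallness of $\epsilon$ is not needed, as the metric \eqref{outside metric} is Lorentzian for every $M$).
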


\begin{proof}
The hypersurface $\Sigma$ defines a smooth embedded one-dimensional embedded curve in the $(v,r_v)$ plane. 
In a neighborhood of $\Sigma$, one can introduce local coordinates $(s,n)$ such that $\Sigma$ is given by $n=0$, where $s$ parametrizes the curve and $n$ is a transverse coordinate.

Let $M(s,n)$ be a smooth function representing the mass function in these adapted coordinates. Suppose that, for some fixed integer $N \geq 0$, we prescribe all partial derivatives of $M$ up to total order $N$ on $\Sigma$, i.e.
\begin{align}
\left.\partial_s^p \partial_n^q M\right|_{n=0}
\qquad \text{for all } p,q \geq 0 \text{ with } p+q \leq N.
\end{align}
We now construct a nontrivial deformation of $M$ that preserves all these data on $\Sigma$ but differs away from it. 
Let $\psi(s,n)$ be an arbitrary smooth function which is not identically zero, and define
\begin{align}
\delta M(s,n) = n^{N+1} \psi(s,n).
\end{align}

By construction, $\delta M$ vanishes to order $N+1$ in the transverse direction $n$. Therefore, for all nonnegative integers $p,q$ such that $p+q \leq N$, one has
\begin{align}
\left.\partial_s^p \partial_n^q \delta M\right|_{n=0} = 0.
\end{align}
Indeed, any derivative involving at most $N$ total derivatives cannot remove the overall factor of $n^{N+1}$, so the result still vanishes at $n=0$.

Now define a new function
\begin{align}
\widetilde{M}(s,n) = M(s,n) + \delta M(s,n).
\end{align}
Then, by linearity,
\begin{align}
\left.\partial_s^p \partial_n^q \widetilde{M}\right|_{n=0}
= \left.\partial_s^p \partial_n^q M\right|_{n=0}
\qquad (p+q \leq N).
\end{align}
Thus $\widetilde{M}$ agrees with $M$ on $\Sigma$ up to all derivatives of total order $\leq N$.
However, since $\psi$ is not identically zero, $\delta M \not\equiv 0$ for $n \neq 0$, and therefore
\begin{align}
\widetilde{M}(s,n) \neq M(s,n)
\end{align}
in any neighborhood of $\Sigma$.

Finally, since the coordinate transformation $(v,r_v)\leftrightarrow (s,n)$ is smooth with nonvanishing Jacobian, equality of derivatives up to order $N$ in $(s,n)$ is equivalent to equality of the derivatives $\partial_v^p \partial_{r_v}^q$ up to the same order in $(v,r_v)$. Therefore,
\begin{align}
\left.\partial_v^p \partial_{r_v}^q \widetilde{M}\right|_{\Sigma}
=
\left.\partial_v^p \partial_{r_v}^q M\right|_{\Sigma}
\qquad (p+q \leq N),
\end{align}
while $\widetilde{M} \neq M$ away from $\Sigma$.
This proves that prescribing $M$ and finitely many of its derivatives on $\Sigma$ does not uniquely determine $M(v,r_v)$ in any neighborhood of $\Sigma$.
\end{proof}

\begin{remark}
Even prescribing the full jet on $\Sigma$ does not guarantee uniqueness within the class $C^{\infty}$: there exist nontrivial smooth functions that vanish to infinite order on $\Sigma$ (e.g. $\delta M=e^{-1 / n^2}$ for $n \neq 0, \delta M=0$ for $n=0)$. Uniqueness from jets holds if one strengthens regularity to real-analyticity, or if one supplies additional bulk field equations that propagate the data off $\Sigma$.
\end{remark}

\paragraph*{\textbf{Apparent freedom}:}

In metric $f(R)$ gravity, when the exterior is taken to be the standard Vaidya spacetime $M=M(v)$, the junction conditions overconstrain the system, indicating that homogeneous collapse generically admits no solution. By contrast, within the generalized Vaidya class $M=M\left(v, r_v\right)$, the matching determines only the restriction of $M$ (and one of its derivatives) on $\Sigma$, leaving genuine functional freedom away from $\Sigma$. This additional freedom allows the two extra $f(R)$ junction conditions to be satisfied, making the collapse problem a priori consistent. However, as we show below, this apparent freedom is in fact strongly constrained by the field equations.

\subsection{Matching in $f(R)$ Gravity}

We now impose the junction conditions of Sec.~\ref{sec:junc-cond} to the FLRW--generalized Vaidya matching. For the FLRW interior,
\begin{align}
\label{eq:rminus}
R_- = 6\left(\frac{\ddot a}{a}+\frac{\dot a^2}{a^2}\right),
\end{align}
while for the exterior,
\begin{align}
\label{eq:rplus}
R_+ = \frac{2}{r_v^2}\left(r_v M_{,r_v r_v}+2M_{,r_v}\right).
\end{align}
Continuity of $R$ yields
\begin{align}
\left.M_{,r_v r_v}\right|_{\Sigma}
=r_0\left(\ddot a+\frac{2\dot a^{2}}{a}\right).
\label{eq:ddM-rv}
\end{align}
The condition on the normal derivative~\eqref{eq:junc4} gives
\begin{align}\label{eq:normder-R}
\left.\Bigl(\partial_v R_{+}+(1-\dot{\mathcal{R}})\partial_{r_v}R_{+}\Bigr)\right|_{\Sigma}=0.
\end{align}
Using Eq.~\eqref{eq:rplus}, this leads to
\begin{align}
\label{eq:cond6}
&\left.\Bigl((r_0 a) M_{,r_v r_v v}+2M_{,r_v v}\Bigr)\right|_{\Sigma}
=-\,r_0\frac{1-r_0\dot a}{a}
\Bigl(a^{2}M_{,r_v r_v r_v}\nonumber\\
&\left.-3a\ddot a\Bigr)\right|_{\Sigma}.
\end{align}
The above conditions constrain only derivatives of the mass function $M(v,r_v)$ along the hypersurface $\Sigma$. In particular, they determine $M$ and a finite number of its derivatives on $\Sigma$, but do not provide information about its extension into the bulk spacetime. Consequently, by Proposition~\ref{prop:non-uniqueness}, the additional $f(R)$ junction conditions do not uniquely determine the exterior mass function.

We now analyze how much information about the exterior mass function $M(v,r_v)$ is fixed by the matching conditions on the hypersurface $\Sigma$. In particular, we determine which derivatives of $M$ are fixed by the interior dynamics encoded in the scale factor $a(t)$.

\begin{proposition}
All partial derivatives of the mass function $M(v, r)$ up to total order 3 are fixed on $\Sigma$ by $a(t)$ and its derivatives (up to $\left.a^{(4)}\right)$
\end{proposition}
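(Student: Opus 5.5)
The idea is to propagate the boundary data along $\Sigma$ by tangential differentiation, and to use the one genuinely transverse condition, Eq.~\eqref{eq:cond6}, to fix the single third-order derivative that tangential propagation cannot reach. Write $F_\Sigma(t)\equiv F(\mathcal{T}(t),\mathcal{R}(t))$ for any function $F(v,r_v)$. The chain rule gives
\begin{align}
\frac{d F_\Sigma}{dt}=\dot{\mathcal{T}}\,F_{,v}\big|_\Sigma+\dot{\mathcal{R}}\,F_{,r_v}\big|_\Sigma .
\end{align}
Since $\dot{\mathcal{T}}=1/(1-r_0\dot a)$ is finite and nonzero, this can be solved as
\begin{align}
F_{,v}\big|_\Sigma=(1-r_0\dot a)\Bigl(\frac{dF_\Sigma}{dt}-r_0\dot a\,F_{,r_v}\big|_\Sigma\Bigr).
\end{align}
So whenever $F$ and $F_{,r_v}$ are known on $\Sigma$ as functions of $t$, $F_{,v}$ is known on $\Sigma$ as well. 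I will count orders of $a$ at each step.

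\textbf{Orders 0 and 1.} Eq.~\eqref{eq:M-sigma} gives $M_\Sigma$ in terms of $(a,\dot a)$. Eq.~\eqref{eq:derM-sigma} gives $M_{,r_v}|_\Sigma$ in terms of $(a,\dot a,\ddot a)$. Applying the tangential relation to $F=M$ then gives $M_{,v}|_\Sigma$ in terms of $(a,\dot a,\ddot a)$.

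\textbf{Order 2.} Eq.~\eqref{eq:ddM-rv} gives $M_{,r_vr_v}|_\Sigma$ in terms of $(a,\dot a,\ddot a)$. Applying the relation to $F=M_{,r_v}$ gives $M_{,r_vv}|_\Sigma$, involving up to $\dddot a$. Applying it to $F=M_{,v}$, whose $r_v$-derivative $M_{,vr_v}$ we have just obtained, gives $M_{,vv}|_\Sigma$, again involving up to $\dddot a$.

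\textbf{Order 3, tangential part.} Of the four third derivatives, three can be reached this way. Applying the relation to $F=M_{,v}$ and $F=M_{,vv}$ gives $M_{,vvr_v}|_\Sigma$ and $M_{,vvv}|_\Sigma$. The latter is the most differentiated quantity, since it comes from one more $t$-derivative of $M_{,vv}$, so it involves $a^{(4)}$. This accounts for the bound stated in the proposition. However, applying the relation to $F=M_{,r_vr_v}$ only gives
\begin{align}
M_{,r_vr_vv}\big|_\Sigma=(1-r_0\dot a)\Bigl(\frac{d}{dt}M_{,r_vr_v}\big|_\Sigma-r_0\dot a\,M_{,r_vr_vr_v}\big|_\Sigma\Bigr).
\end{align}
This still contains the purely transverse unknown $M_{,r_vr_vr_v}|_\Sigma$, which no tangential argument can supply.

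\textbf{Closing the system with Eq.~\eqref{eq:cond6}.} I substitute the expression above, together with the already known $M_{,r_vv}|_\Sigma$, into Eq.~\eqref{eq:cond6}. The result is a linear equation for $X\equiv M_{,r_vr_vr_v}|_\Sigma$. Collecting the coefficient of $X$ from both sides gives
\begin{align}
-r_0^2a\dot a(1-r_0\dot a)+r_0a(1-r_0\dot a)=r_0a(1-r_0\dot a)^2 .
\end{align}
This is nonzero provided $a\neq0$ and $r_0\dot a\neq1$, and the latter is exactly the condition already needed for $\dot{\mathcal{T}}$ to be finite. Hence $X$ is uniquely determined, involving at most $\dddot a$. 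Back-substituting then fixes $M_{,r_vr_vv}|_\Sigma$. Taken together, all ten derivatives of total order at most three are fixed by $a$ and its derivatives up to $a^{(4)}$.

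I expect the main obstacle to be this last step: checking the sign conventions in Eq.~\eqref{eq:cond6}, and verifying that the coefficient of $X$ genuinely does not cancel. The cancellation-free factorization $r_0a(1-r_0\dot a)^2$ above is what I would verify first. I would also state explicitly that the result holds only on the portion of $\Sigma$ away from $a=0$ and $r_0\dot a=1$.
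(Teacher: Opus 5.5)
Your route is the paper's: propagate the boundary data along $\Sigma$ with the chain rule, and use the normal-derivative condition \eqref{eq:cond6} to supply the one transverse datum. Your coefficient $r_0a(1-r_0\dot a)^2$ for $X=M_{,r_vr_vr_v}|_\Sigma$ is correct. Solving gives $X=3\ddot a/a-3r_0P/\Delta$ and then $M_{,r_vr_vv}|_\Sigma=r_0(1+2r_0\dot a)P$, in agreement with \eqref{eq:Mrrr} and \eqref{eq:Mrrv}. This spells out the step the paper compresses into ``proceeding iteratively''. Also, $M_{,v}|_\Sigma$ is identically zero, which gives the simple form $M_{,vv}|_\Sigma=-r_0^3Q\Delta^2P$.

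There is, however, a mistake in your ``tangential part'' at third order. Applying the relation to $F=M_{,v}$ returns $M_{,vv}$, a second-order quantity, not $M_{,vvr_v}$. To reach $M_{,r_vvv}$ you must apply it to $F=M_{,r_vv}$, which requires $F_{,r_v}=M_{,r_vr_vv}$. Likewise, obtaining $M_{,vvv}$ from $F=M_{,vv}$ requires $M_{,r_vvv}$. So the three tangential relations at third order form a chain $M_{,r_vr_vr_v}\to M_{,r_vr_vv}\to M_{,r_vvv}\to M_{,vvv}$ with one undetermined function. \emph{None} of the four third derivatives is fixed by tangential propagation alone: each depends on $X$ whenever $\dot a\neq0$. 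It is \eqref{eq:cond6} that pins the chain.

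Your determination of $X$ uses only the $F=M_{,r_vr_v}$ relation and the already known $M_{,r_vv}|_\Sigma$, so there is no circularity. The repair is just to reorder: after $X$, obtain $M_{,r_vr_vv}$, then $M_{,r_vvv}$, then $M_{,vvv}$. The last two involve $\dot P$ and hence $a^{(4)}$. Carrying this out gives
\[
M_{,r_vvv}|_\Sigma=r_0^2\Delta\bigl(\Delta a\dot P-r_0(3\dot a^2+a\ddot a)P\bigr).
\]
The paper's \eqref{eq:Mrvv} omits the contribution of $\dot\Delta=-r_0\ddot a$. Its \eqref{eq:Mvvv} inherits the slip: the $2a\ddot a$ there should read $3a\ddot a$. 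This does not affect the proposition itself.
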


\begin{proof}
We begin by recalling that the junction conditions determine the value of the mass function 
% \begin{align}
% \label{eq:M-sigma}
% M_\Sigma(t) = \frac{r_0^3}{2} a(t)\dot{a}(t)^2,
% \end{align}
as well as its radial derivative on $\Sigma$ as given in Eqs.~\eqref{eq:M-sigma} and \ref{eq:derM-sigma}.
% \begin{align}
% \label{eq:derM}
% \left.M_{,r_v}\right|_{\Sigma} = r_0^2 \left(\frac{1}{2}\dot{a}(t)^2 + a(t)\ddot{a}(t)\right).
% \end{align}
To determine the derivative $M_{,v}$ on $\Sigma$, we 
% differentiate $M_\Sigma(t)$ along the hypersurface. Using 
use the chain rule to obtain
\begin{align}
\frac{dM_\Sigma}{dt} = \left.M_{,v}\right|_{\Sigma} \dot{\mathcal{T}} + \left.M_{,r_v}\right|_{\Sigma} \dot{\mathcal{R}}.
\end{align}
Substituting the relations~(\ref{eq:defR}-\ref{eq:derM-sigma}) into the above equation, and assuming $\dot{\mathcal{T}}\neq 0$, 
% \begin{align}
% R(t) = r_0 a(t), \qquad \dot{T} = \frac{1}{1 - r_0 \dot{a}(t)}, \qquad \dot{R} = r_0 \dot{a}(t),
% \end{align}
% we substitute the known expressions for $M_\Sigma$ and $M_{,r_v}$ into the above equation. A direct computation 
shows that the resulting expression is identically satisfied if and only if
\begin{align}
\left.M_{,v}\right|_{\Sigma} = 0.
\end{align}

We now proceed to determine second-order derivatives. Differentiating $\left.M_{,r_v}\right|_{\Sigma}$ with respect to $t$ yields
\begin{align}
\frac{d}{dt}\left(M_{,r_v}\right)_\Sigma
=
\left.M_{,r_v v}\right|_{\Sigma} \dot{\mathcal{T}}
+
\left.M_{,r_v r_v}\right|_{\Sigma} \dot{\mathcal{R}}.
\end{align}
From the $f(R)$ junction condition, we already have Eq.~\eqref{eq:ddM-rv}. 
% \begin{align}
% \left.M_{,r_v r_v}\right|_{\Sigma}
% =
% r_0\left(\ddot{a}+\frac{2\dot{a}^2}{a}\right).
% \end{align}
Substituting this and the expressions for $\dot{\mathcal{T}}$~\eqref{eq:defdotT} and $\dot{\mathcal{R}}~\eqref{eq:defR}$, one can solve for $\left.M_{,r_v v}\right|_{\Sigma}$ in terms of $a(t)$ and its derivatives. Similarly, differentiating $\left.M_{,v}\right|_{\Sigma} = 0$ determines $\left.M_{,vv}\right|_{\Sigma}$. We obtain
\begin{align}
\label{eq:Mrv}
\left.M_{, r_v v}\right|_{\Sigma} &= r_0^2\,a\,\Delta\,P,\\
\label{eq:Mvv}
\left.M_{, v v}\right|_{\Sigma} &=-r_0^3\,Q\,\Delta^2\,P,
\end{align}
where we introduced $\Delta=1-r_0\dot a=\dot{\mathcal{T}}^{-1}$, $Q=a\dot a$, and
\begin{align}
P=\dddot{a}+\frac{\dot{a}\ddot{a}}{a}-2\frac{\dot{a}^3}{a^2}.
\end{align}
Proceeding iteratively, one can determine all mixed derivatives up to third order. In particular, by repeated differentiation and use of the junction conditions, one finds
\begin{align}
\label{eq:Mrrr}
&\left.M_{, r_v r_v r_v}\right|_{\Sigma}  = 3\frac{\ddot a}{a}-\frac{3 r_0}{\Delta}P,\\
\label{eq:Mrrv}
&\left.M_{, r_v r_v v}\right|_{\Sigma}  = r_0 \left(1+2r_0 \dot a\right)P,\\
\label{eq:Mrvv}
&\left.M_{, r_v v v}\right|_{\Sigma}  = r_0^2 \Delta \left(\Delta a \dot{P}-3 r_0 \dot{a}^2 P\right),\\
\label{eq:Mvvv}
&\left.M_{, v v v}\right|_{\Sigma}  = -r_0^3 \Delta^2\left[\Delta \left(2 Q \dot{P}+P \dot{Q}\right)-r_0 \dot a P\left(2a\ddot a+3\dot a^2\right)\right].
\end{align}

\end{proof}

\begin{proposition}
\label{prop:non-uniqueness1}
The matching data on $\Sigma$ do not uniquely determine the local exterior solution.
\end{proposition}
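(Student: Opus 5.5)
The plan is to build an explicit family of distinct exterior mass functions that all reproduce the same matching data on $\Sigma$. The matching data are everything fixed so far: the Darmois--Israel conditions \eqref{eq:defR}--\eqref{eq:derM-sigma} and the $f(R)$ conditions \eqref{eq:ddM-rv} and \eqref{eq:cond6}. By the preceding proposition, these conditions involve only $M$ and its partial derivatives up to total order $3$ on $\Sigma$, and they fix all of those derivatives in terms of $a(t)$ and its derivatives up to $a^{(4)}$. Since the exterior is taken unrestricted (its source is of the form \eqref{EMT1}--\eqref{EMT3} with no functional constraint imposed), any smooth $M(v,r_v)$ defines an admissible exterior. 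The field equations \eqref{eq:fr_field} then simply define $T_{\mu\nu}$ and add no constraint on $M$.

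The first step is to fix one exterior $M_0$ realizing the data. I will construct it in adapted coordinates $(s,n)$ near $\Sigma$, with $s=t$ and $n$ transverse, as the Taylor polynomial in $n$ of degree $3$. Its coefficients $c_q(s)=\partial_n^q M|_{n=0}$ are obtained from the $(v,r_v)$ derivatives listed in Eqs.~\eqref{eq:M-sigma}, \eqref{eq:derM-sigma}, \eqref{eq:Mrv}--\eqref{eq:Mvvv} through the chain rule. This is possible because the coordinate change is smooth with nonvanishing Jacobian, which holds since $\dot{\mathcal T}=\Delta^{-1}\neq0$. For consistency, the tangential derivatives of the $c_q$ must agree with the prescribed mixed derivatives. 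This is exactly the compatibility that was used to derive those derivatives by differentiating along $\Sigma$, so I take it as given.

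The second step invokes Proposition~\ref{prop:non-uniqueness} with $N=3$. I set $\widetilde M=M_0+n^{4}\psi(s,n)$ with $\psi\not\equiv0$ smooth. To keep the exterior globally smooth and localized near $\Sigma$, I can take $\psi$ to be a bump function times an arbitrary function. Every $\partial_v^p\partial_{r_v}^q\widetilde M$ with $p+q\le3$ then agrees with the corresponding derivative of $M_0$ on $\Sigma$. Consequently all four Darmois--Israel conditions and both conditions \eqref{eq:junc3} and \eqref{eq:junc4} continue to hold, since they involve at most $M_{,r_vr_vr_v}$ and $M_{,r_vr_vv}$. The metric \eqref{outside metric}, however, differs in every neighborhood of $\Sigma$.

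The last step checks that the two spacetimes are genuinely different and not related by a diffeomorphism. I will compare the areal-radius invariant $M=\tfrac{r_v}{2}(1-g^{\mu\nu}\partial_\mu r_v\partial_\nu r_v)$, the Misner--Sharp mass, which differs as a function on the $(v,r_v)$ chart. Alternatively I can compare $R_+$ from \eqref{eq:rplus}: choosing $\psi$ so that $r_v\delta M_{,r_vr_v}+2\delta M_{,r_v}\neq0$ off $\Sigma$ gives different curvature profiles at equal areal radius. I expect the only delicate point to be the compatibility of the prescribed jet in the first step. If that is troublesome, I will instead start from any known solution, for instance the one implied by \eqref{eq:ddM-rv}, and note that the deformation argument needs only the existence of one extension, not its explicit construction.
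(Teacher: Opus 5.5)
The one real problem is the admissibility claim in your first paragraph. You identify the ``unrestricted'' exterior with a source of the form \eqref{EMT1}--\eqref{EMT3}, and conclude that the field equations merely define $T_{\mu\nu}$ and impose nothing on $M$. In metric $f(R)$ gravity this is false. Such a source has $T_{r_vr_v}=0$, and for the metric \eqref{outside metric} one also has $g_{r_vr_v}=R_{r_vr_v}=\Gamma^{\alpha}{}_{r_vr_v}=0$. So the $(r_v r_v)$ component of \eqref{eq:fr_field} is the bulk equation $\partial_{r_v}^2 f_{,R}(R_+)=0$ of Proposition~\ref{constrain-FE}. A generic deformation $M_0+n^4\psi$ violates it. Inside that sector the paper later shows that the matching data \emph{do} fix the exterior (for locally invertible $f_{,R}$ with $f_{,RR}\neq 0$), so your argument, as justified, would contradict those results. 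The repair is to take the exterior $T_{\mu\nu}$ genuinely arbitrary, i.e.\ defined by the left-hand side of \eqref{eq:fr_field}. It will then generically have $T_{r_vr_v}\neq 0$ and fall outside \eqref{EMT1}--\eqref{EMT3}. The blanket assumption of Sec.~\ref{sec:gen-vaidya} invites your reading, but the arbitrary-matter setting is the one the paper intends here, as the remark after the Starobinsky uniqueness result makes explicit.

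With that change your argument is correct, and it takes a different route from the paper's. The paper builds no bulk extension. It argues that the only fourth-order derivatives of $M$ entering the field equations are $M_{,r_vr_vr_vr_v}$, $M_{,r_vr_vr_vv}$ and $M_{,r_vr_vvv}$, so that $M_{,r_vvvv}$ and $M_{,vvvv}$ stay undetermined on $\Sigma$. You instead note that every junction condition, \eqref{eq:defR}--\eqref{eq:derM-sigma}, \eqref{eq:ddM-rv} and \eqref{eq:cond6}, sees only the $3$-jet of $M$ on $\Sigma$, and you apply Proposition~\ref{prop:non-uniqueness} with $N=3$. That gives honest bulk counterexamples and avoids the derivative bookkeeping, which is delicate on its own. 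Differentiating the already-fixed third-order jet along $\Sigma$ gives four relations among the five fourth-order derivatives, solvable since $\dot{\mathcal{T}}\neq0$. Hence $M_{,r_vvvv}$ and $M_{,vvvv}$ are fixed as soon as $M_{,r_vr_vr_vr_v}$ is. The only free fourth-order datum is this transverse derivative, which is exactly what your $n^4\psi$ changes, and it is free only because arbitrary matter imposes no equation on it.

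Your worry about compatibility in the first step is also unnecessary. With $v=\mathcal{T}(s)$ and $r_v=\mathcal{R}(s)+n$ one has $\partial_n=\partial_{r_v}$ and $\partial_v=\Delta(\partial_s-\dot{\mathcal{R}}\partial_n)$. A cubic $M_0=\sum_{q\le3}c_q(s)\,n^q/q!$ then works: read $c_0,c_1,c_2$ off \eqref{eq:M-sigma}, \eqref{eq:derM-sigma} and \eqref{eq:ddM-rv}, and solve \eqref{eq:cond6} for $c_3$, which appears there with coefficient $r_0a\Delta^2\neq0$. No consistency check on the full jet of the preceding proposition is needed. Your final check that the deformed exteriors are not isometric goes beyond anything the paper addresses.
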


\begin{proof}
From the previous proposition, we know that all derivatives of $M(v,r_v)$ up to total order three are fixed on $\Sigma$. This corresponds to fixing the full third-order jet of $M$ along the hypersurface. However, the exterior field equations in metric $f(R)$ gravity are fourth-order in derivatives of the metric, and therefore involve derivatives of $M$ up to fourth order. Importantly, not all fourth-order derivatives of $M$ appear in the field equations when restricted to the hypersurface.

To see this, note that the Ricci scalar for the generalized Vaidya metric takes the form \eqref{eq:rplus}, and therefore depends only on radial derivatives of $M$. Consequently, any derivative of $R_+$ contains at least one derivative with respect to $r_v$. The fourth-order terms in the field equations arise from derivatives of $R_+$ and hence involve only combinations of derivatives of $M$ that contain at least two radial derivatives. In particular, the only fourth-order derivatives of $M$ that can appear are
\begin{align}
M_{,r_v r_v r_v r_v}, \quad
M_{,r_v r_v r_v v}, \quad
M_{,r_v r_v v v}.
\end{align}
Crucially, derivatives such as
\begin{align}
M_{,r_v v v v}, \qquad M_{,v v v v}
\end{align}
do not appear in the field equations.

Therefore, even after imposing the field equations on $\Sigma$, there remain undetermined components of the fourth-order jet of $M$. Since these derivatives are not constrained, one can construct distinct functions $M$ that agree up to third order on $\Sigma$ but differ at fourth order and beyond. This implies that the local solution for $M(v,r_v)$ is not uniquely determined by the matching data and the field equations restricted to $\Sigma$.

\end{proof}

% This result reflects the fact that the matching hypersurface provides only lower-dimensional data, which is insufficient to determine a solution of a higher-order partial differential system uniquely in the bulk.

\begin{proposition}
In metric $f(R)$ gravity with $f_{,RR} \neq 0$, a homogeneous FLRW interior with non-vanishing Ricci scalar $R_-(t) \neq 0$ cannot be matched to a Ricci-flat exterior across a regular timelike hypersurface $\Sigma$.
\end{proposition}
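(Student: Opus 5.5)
The argument uses only the scalar junction condition \eqref{eq:junc3}, so it is essentially a one-line contradiction. A Ricci-flat exterior has $R_{\mu\nu}^+=0$, so $R_+\equiv 0$ throughout the exterior, and in particular on $\Sigma$. Since $f_{,RR}\neq 0$ on $\Sigma$, the regular-matching conditions \eqref{eq:junc1}--\eqref{eq:junc4} apply. Condition \eqref{eq:junc3} then requires $R_-|_\Sigma = R_+|_\Sigma = 0$.

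Next I use homogeneity of the interior. By \eqref{eq:rminus}, $R_-=6(\ddot a/a+\dot a^2/a^2)$ depends only on $t$, and $\Sigma$ is the comoving surface $r=r_0$, parametrized by $t$. So $R_-|_\Sigma(t)=R_-(t)$ for every $t$ in the matching interval. Equation \eqref{eq:junc3} would force $R_-(t)=0$ there, which contradicts the hypothesis $R_-(t)\neq 0$. If the hypothesis is read as ``not identically zero'' rather than ``nowhere zero'', the conclusion is that no regular matching exists on any open time interval where $R_-\neq 0$.

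As a consistency check with the generalized Vaidya setting, I would also note the following. Even if the Ricci-flat exterior were only required to have $R_+=0$ rather than full Ricci-flatness, \eqref{eq:rplus} gives $r_v M_{,r_vr_v}+2M_{,r_v}=0$ everywhere. In the Ricci-flat case $M=M(v)$, and then \eqref{eq:derM-sigma} yields $a\ddot a+\tfrac12\dot a^2=0$. Combined with \eqref{eq:ddM-rv}, this gives $\ddot a+2\dot a^2/a=0$. Together these imply $\dot a=\ddot a=0$, so $a$ is constant and $R_-=0$, again contradicting the hypothesis. This second route is optional and only reinforces the result.

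The main obstacle is not computational but one of hypotheses. I must make sure the statement is read with the regular (non-distributional) junction conditions, which is precisely where $f_{,RR}\neq0$ on $\Sigma$ enters; I will state this explicitly. With a thin shell, a jump in $R$ could be absorbed by distributional sources. I would also remark that \eqref{eq:junc4} is not needed, and that the result does not rely on the particular coordinate form of the exterior, since $R_+=0$ is coordinate-invariant.
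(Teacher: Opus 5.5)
Your argument is correct and is the same as the paper's: $R_+\equiv 0$ together with the continuity condition \eqref{eq:junc3} forces $R_-=0$ on $\Sigma$, and since $R_-$ depends only on $t$ (with $\Sigma$ the comoving surface $r=r_0$ parametrized by $t$), this contradicts $R_-(t)\neq 0$. Your optional cross-check is also correct but not needed: with $M_{,r_v}=0$, the equations \eqref{eq:derM-sigma} and \eqref{eq:ddM-rv} give $\dot a=\ddot a=0$.
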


\begin{proof}
A Ricci-flat exterior satisfies $R_+ = 0$. The continuity condition $[R]_\Sigma = 0$ then implies $R_-(t)=0$ on $\Sigma$. Since the FLRW interior is homogeneous, $R_-(t)$ has no spatial dependence, and therefore the boundary condition forces $R_-(t)=0$ everywhere. This contradicts the assumption $R_-(t)\neq 0$.
\end{proof}

\begin{remark}
This result is consistent with previous analyses in the literature (see, e.g., \cite{Senovilla:2013vra,Casado-Turrion:2022xkl}).
\end{remark}

\begin{remark}
If both the interior and exterior satisfy $R_+=R_-=0$, then the additional junction conditions of metric $f(R)$ gravity are automatically satisfied. In this case, the matching reduces to the standard Darmois--Israel conditions. 

For the spatially flat FLRW interior, the condition $R_-(t)=0$ implies
\begin{align}
a(t)=a(t_0)\Bigl(1+A(t-t_0)\Bigr)^{1/2}.
\end{align}
The matching then determines the exterior mass function uniquely within the scalar-flat branch. Thus, the obstruction discussed above disappears in this special case.
\end{remark}

\begin{remark}
In the generalized Vaidya class, the condition $R_+=0$ constrains the mass function to the form
\begin{align}
M(v,r)=m_1(v)+\frac{m_2(v)}{r}.
\end{align}
The Schwarzschild solution is recovered by taking $m_2(v)=0$ and $m_1(v)=\mathrm{const}$. This illustrates that Ricci-flat generalized Vaidya spacetimes form a highly restricted subclass.
\end{remark}

% \paragraph*{\textbf{Constraints from Field Equations}:}

% We now 
% % examine the implications of the field equations for the exterior geometry. In particular, we 
% show that the functional form of $f_{,R}$ is strongly constrained by the structure of the generalized Vaidya metric and the assumed matter content.

\begin{proposition}\label{constrain-FE}
For the generalized Vaidya metric \eqref{outside metric} with the energy--momentum tensor given by Eqs.~(\ref{EMT1}-\ref{EMT3}), the $(1,1)$ component of the field equations in metric $f(R)$ gravity~\eqref{eq:fr_field} implies
\begin{align}
\label{eq:derf-form}
f_{,R}(R(v,r_v))=A(v)\,r_v+B(v),
\end{align}
where $A(v)$ and $B(v)$ are arbitrary functions of the null coordinate $v$.
\end{proposition}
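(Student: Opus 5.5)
The plan is to evaluate the $rr$ component (index $1\equiv r_v$, both indices down) of Eq.~\eqref{eq:fr_field} directly on the metric \eqref{outside metric}, writing $F(v,r_v)\equiv f_{,R}(R(v,r_v))$. The point is that almost every term in this component vanishes identically because of the null structure of the generalized Vaidya line element, so that only a single second derivative of $F$ survives.

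\textbf{Source side.} From Eqs.~(\ref{EMT1}--\ref{EMT3}) we have $l_{r_v}=0$ and $n_{r_v}=-1$, and $g_{r_vr_v}=0$. Therefore $T^{(n)}_{r_vr_v}=\mu\, l_{r_v}l_{r_v}=0$. Similarly $T^{(m)}_{r_vr_v}=2(\rho+p)\,l_{r_v}n_{r_v}+p\,g_{r_vr_v}=0$. Hence $T_{r_vr_v}=0$ for arbitrary $\mu,\rho,p$. No particular functional form such as Eqs.~(\ref{GRV1}--\ref{GRV3}) is needed; we only use that the source has the algebraic structure (\ref{EMT1}--\ref{EMT3}).

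\textbf{Geometric side.} Since $g_{r_vr_v}=0$, the terms $-\tfrac12 f\,g_{r_vr_v}$ and $g_{r_vr_v}\Box F$ drop out. Next I will compute the Christoffel symbols $\Gamma^\lambda{}_{r_vr_v}=\tfrac12 g^{\lambda\sigma}\bigl(2\partial_{r_v}g_{\sigma r_v}-\partial_\sigma g_{r_vr_v}\bigr)$. The nonzero metric components with an $r_v$ index are $g_{vr_v}=1$ (constant) and $g_{r_vr_v}=0$, and the angular components $r_v^2 d\Omega^2$ carry no $r_v$ index. So both terms in the bracket vanish and $\Gamma^\lambda{}_{r_vr_v}=0$ for all $\lambda$.

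This has two consequences. First, $\nabla_{r_v}\nabla_{r_v}F=\partial_{r_v}^2F$. Second, $R_{r_vr_v}=0$. The latter follows from the standard expression for the Ricci tensor, or from the known result that for the line element \eqref{gg-vaidya} one has $R_{r_vr_v}\propto \psi_{,r_v}/r_v$, which vanishes for $\psi=0$. Equivalently, $\partial_{r_v}$ is an affinely parametrized null geodesic generator ($\Gamma^\lambda{}_{r_vr_v}=0$) of a shear-free, expansion-given congruence whose Raychaudhuri equation yields $R_{\mu\nu}k^\mu k^\nu=0$. I will verify this directly from $R_{r_vr_v}=\partial_\lambda\Gamma^\lambda{}_{r_vr_v}-\partial_{r_v}\Gamma^\lambda{}_{\lambda r_v}+\Gamma\Gamma-\Gamma\Gamma$. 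Here $\Gamma^\lambda{}_{\lambda r_v}=\partial_{r_v}\ln\sqrt{-g}=2/r_v$, since $\sqrt{-g}=r_v^2\sin\theta$. Checking that this term cancels against the angular $\Gamma\Gamma$ terms (each $\Gamma^\theta{}_{\theta r_v}=\Gamma^\varphi{}_{\varphi r_v}=1/r_v$) is the one genuinely computational step. I expect it to be the main place where care is needed, though it is routine.

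\textbf{Conclusion.} Putting the pieces together, the $(1,1)$ equation collapses to $-\partial_{r_v}^2F=\kappa T_{r_vr_v}=0$. Integrating twice in $r_v$ at fixed $v$ gives $f_{,R}(R(v,r_v))=A(v)\,r_v+B(v)$ with arbitrary integration functions $A(v)$ and $B(v)$, which is Eq.~\eqref{eq:derf-form}.

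I will close by noting that this holds for any $f$. The restrictive consequence, namely that $R$ itself is then constrained when $f_{,R}$ is invertible, is deferred to the subsequent analysis.
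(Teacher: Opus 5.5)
Your proposal is correct and follows the same route as the paper. In both, the $(1,1)$ equation collapses to $-\partial_{r_v}^2 f_{,R}=0$: this uses $g_{r_vr_v}=R_{r_vr_v}=T_{r_vr_v}=0$ and $\Gamma^\lambda{}_{r_vr_v}=0$, and integrating twice in $r_v$ gives $A(v)\,r_v+B(v)$. Your explicit checks are accurate and fill in what the paper only asserts. These are $T_{r_vr_v}=0$ from $l_{r_v}=0$, and $R_{r_vr_v}=2/r_v^2-2/r_v^2=0$ from the trace and angular Christoffel terms.
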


\begin{proof}
The field equations of metric $f(R)$ gravity are given by Eq.~\eqref{eq:fr_field}.
We consider the $(1,1)$ component of these equations in the coordinate system $(v,r_v,\theta,\varphi)$ adapted to the metric \eqref{outside metric}. For the generalized Vaidya metric, one has
\begin{align}
g_{r_v r_v} = 0, \qquad R_{r_v r_v} = 0.
\end{align}
Moreover, for the matter content described by Eqs.~(\ref{EMT1}-\ref{EMT3}), the $(1,1)$ component of the energy--momentum tensor also vanishes,
\begin{align}
T_{r_v r_v} = 0.
\end{align}

Substituting these results into Eq.~\eqref{eq:fr_field}, the $(1,1)$ component reduces to
\begin{align}
-\nabla_{r_v}\nabla_{r_v} f_{,R} = 0.
\end{align}
Since $f_{,R}$ is a scalar function, this becomes
\begin{align}
\partial_{r_v}^2 f_{,R}(R(v,r_v)) = 0,
\end{align}
where we have used the fact that the relevant Christoffel symbols vanish for this component in the chosen coordinates. 
The above equation is an ordinary differential equation in $r_v$, whose general solution is Eq.~\eqref{eq:derf-form}.
% \begin{align}
% f_{,R}(R(v,r_v)) = A(v)\,r_v + B(v),
% \end{align}
% where $A(v)$ and $B(v)$ are arbitrary functions of $v$ arising as integration functions.

\end{proof}

\begin{remark}
The result shows that the scalar degree of freedom $f_{,R}$ is restricted to be at most linear in the areal radius $r_v$ along each null slice. This represents a strong constraint on the allowed scalar configurations in the exterior spacetime.
\end{remark}

In particular, this restriction implies that the scalar field does not admit localized radial profiles in the generalized Vaidya background, a feature that will play a crucial role in the discussion of physically admissible solutions.

\subsection{Starobinsky Model}

We now consider the Starobinsky model,
\begin{align}
f(R)=R+\alpha R^2,
\end{align}
for which
\begin{align}
f_{,R} = 1 + 2\alpha R.
\end{align}

From Proposition~\ref{constrain-FE}, we know that in the generalized Vaidya geometry the quantity $f_{,R}(R(v,r_v))$ must take the form~\eqref{eq:derf-form}.
% \begin{align}
% f_{,R}(R(v,r_v)) = A(v)\,r_v + B(v),
% \end{align}
% where $A(v)$ and $B(v)$ are arbitrary functions of $v$. 
It follows that the Ricci scalar is given by
\begin{align}\label{Rstarov}
R(v,r_v) = \frac{A(v)\,r_v + B(v) - 1}{2\alpha},
\end{align}
and, using Eq.~\eqref{Rstarov}, we obtain
\begin{align}
M(v,r_v)=\frac{A(v)}{80\alpha}r_v^4+\frac{B(v)-1}{48\alpha}r_v^3+\frac{C(v)}{r_v}+D(v),
\label{eq:massStaro_full}
\end{align}
where $C(v)$ and $D(v)$ are arbitrary functions of $v$ arising as integration functions.

% We now impose the matching conditions on the hypersurface $\Sigma$, where
% \begin{align}
% r_v = R(t) = r_0 a(t), \qquad v = T(t),
% \end{align}
% with
% \begin{align}
% \dot T = \frac{1}{1 - r_0 \dot a}.
% \end{align}

\begin{proposition}
The matching conditions~(\ref{eq:defR}-\ref{eq:derM-sigma}) uniquely determine the functions $A(v)$, $B(v)$, $C(v)$, and $D(v)$.
\end{proposition}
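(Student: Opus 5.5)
The plan is to turn the matching data into a pointwise linear algebraic system for $(A,B,C,D)$ along $\Sigma$, solve it, and then transport the result from the parameter $t$ to the null coordinate $v$. Concretely, I would restrict the explicit mass function \eqref{eq:massStaro_full} and its radial derivatives to $\Sigma$, where $r_v=\mathcal{R}(t)=r_0a(t)$ and $v=\mathcal{T}(t)$. By Eqs.~\eqref{eq:M-sigma}, \eqref{eq:derM-sigma}, \eqref{eq:ddM-rv} and the earlier Proposition on the third-order jet (Eq.~\eqref{eq:Mrrr}), the left-hand sides below are known functions of $a(t)$ and its derivatives. This gives, at each $t$ and with $\mathcal{R}=r_0a(t)$, $\hat A(t)\equiv A(\mathcal{T}(t))$, and similarly for $\hat B,\hat C,\hat D$,
\begin{align}
M_\Sigma &= \frac{\hat A}{80\alpha}\mathcal{R}^4+\frac{\hat B-1}{48\alpha}\mathcal{R}^3+\frac{\hat C}{\mathcal{R}}+\hat D,\nonumber\\
M_{,r_v}|_\Sigma &= \frac{\hat A}{20\alpha}\mathcal{R}^3+\frac{\hat B-1}{16\alpha}\mathcal{R}^2-\frac{\hat C}{\mathcal{R}^2},\nonumber\\
M_{,r_vr_v}|_\Sigma &= \frac{3\hat A}{20\alpha}\mathcal{R}^2+\frac{\hat B-1}{8\alpha}\mathcal{R}+\frac{2\hat C}{\mathcal{R}^3},\nonumber\\
M_{,r_vr_vr_v}|_\Sigma &= \frac{3\hat A}{10\alpha}\mathcal{R}+\frac{\hat B-1}{8\alpha}-\frac{6\hat C}{\mathcal{R}^4}.\nonumber
\end{align}
The unknown $\hat D$ enters only the first equation. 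So I would first solve the last three equations for $(\hat A,\hat B-1,\hat C)$, which is a $3\times3$ linear system whose coefficients are monomials in $\mathcal{R}$. Factoring out the powers of $\mathcal{R}$ and of $\alpha$, its determinant is a nonzero numerical multiple of $\alpha^{-2}\mathcal{R}^{-1}$. That multiple is a determinant of small integers, which I would check by direct computation. The determinant is therefore nonzero wherever $\mathcal{R}>0$, so $(\hat A,\hat B,\hat C)$ are uniquely fixed, and the first equation then fixes $\hat D$.

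As a consistency check, the combination entering the Ricci scalar must reproduce $R_-$. Using Eq.~\eqref{Rstarov}, continuity of $R$ should be equivalent to $\hat A\mathcal{R}+\hat B-1=2\alpha R_-(t)$, and I would verify that this identity follows from the solved system rather than imposing a new constraint. Next I would pass from $t$ to $v$. Since $\dot{\mathcal{T}}=1/\Delta$ with $\Delta=1-r_0\dot a\neq0$, the map $t\mapsto\mathcal{T}(t)$ is strictly monotone on any interval where $\Delta$ keeps a sign. It therefore has a smooth local inverse $t=\mathcal{T}^{-1}(v)$, and $A(v)=\hat A(\mathcal{T}^{-1}(v))$, and likewise for $B,C,D$, are uniquely defined on the image interval. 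This is precisely the sense of "locally invertible boundary map" used in the abstract.

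The main obstacle I expect is over-determination. The third-order data in Eq.~\eqref{eq:Mrrr} already used the normal-derivative condition \eqref{eq:cond6}, and the $v$-derivative data $M_{,v}|_\Sigma=0$ and Eqs.~\eqref{eq:Mrv}--\eqref{eq:Mvv} are further constraints. Once $(A,B,C,D)$ are fixed as functions of $v$ from the four radial equations, these remaining relations must hold automatically. Otherwise they would constrain $a(t)$ rather than the exterior. For this step I would first differentiate the solved expressions along $\Sigma$ using $d/dt=\dot{\mathcal{T}}\,\partial_v+\dot{\mathcal{R}}\,\partial_{r_v}$, and check that the tangential equations are reproduced identically. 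Uniqueness itself, however, only needs the nonvanishing determinant together with the invertibility of $\mathcal{T}$.
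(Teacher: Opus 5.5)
Your argument is correct, but it is organized differently from the paper's. The paper works directly with $R_+=(Ar_v+B-1)/(2\alpha)$, which contains no $C,D$. The two curvature junction conditions \eqref{eq:staro1}--\eqref{eq:staro2} therefore close on $(A,B)$ alone. Differentiating \eqref{eq:staro1} along $\Sigma$ gives the explicit formulas $A(\mathcal{T})=-2\alpha\dot R_{-}/\Delta$ and $B(\mathcal{T})=1+2\alpha R_{-}-A\,r_0a$. The two Darmois--Israel mass conditions are then a $2\times2$ system for $(C,D)$.

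You instead import $M_{,r_vr_v}|_\Sigma$ and $M_{,r_vr_vr_v}|_\Sigma$ from the jet proposition, so $C$ enters your curvature-type equations and you need a $3\times3$ determinant. The two systems differ by an invertible triangular change of data. Indeed, $R_+=4M_{,r_v}/r_v^2+2M_{,r_vr_v}/r_v$ and $\partial_{r_v}R_+=-8M_{,r_v}/r_v^3+2M_{,r_vr_v}/r_v^2+2M_{,r_vr_vr_v}/r_v=A(v)/(2\alpha)$. So, given your $M_{,r_v}$ equation, your $M_{,r_vr_v}$ and $M_{,r_vr_vr_v}$ equations are equivalent to continuity of $R$ and to $\partial_{r_v}R_+|_\Sigma=-\dot R_{-}/\Delta$, which is exactly the paper's formula for $A$.

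This also shows your determinant is the pure number $1/(16\alpha^2)$. Every term in its expansion carries $\mathcal{R}^0$, so the factor $\mathcal{R}^{-1}$ you anticipated is absent; this slip is harmless for the conclusion.

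Your over-determination worry resolves itself. The tangential values $M_{,v}|_\Sigma=0$ and Eqs.~\eqref{eq:Mrv}--\eqref{eq:Mvv} follow from the radial data via $d/dt=\dot{\mathcal{T}}\partial_v+\dot{\mathcal{R}}\partial_{r_v}$ with $\dot{\mathcal{T}}\neq0$. Eq.~\eqref{eq:cond6} then holds because \eqref{eq:Mrrr} was derived from it using exactly these chain-rule relations. Hence any smooth $M$ reproducing your four radial values on $\Sigma$ satisfies all of them identically.

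Your route buys a manifestly pointwise-algebraic uniqueness proof. The paper's route buys the closed form $A\propto\dot R_{-}$, which the subsequent remark uses to conclude that $A=0$ forces constant $R_{-}$. It also transfers verbatim to generic $f(R)$, where $M_f$ depends nonlinearly on $(A,B)$ and your system in $(\hat A,\hat B,\hat C)$ would no longer be linear.
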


\begin{proof}
We first use the continuity of the Ricci scalar across $\Sigma$,
\begin{align}
R_+(\mathcal{T}(t),\mathcal{R}(t)) = R_-(t),
\end{align}
where $R_-(t)$ is given in Eq.~\eqref{eq:rminus}.
% \begin{align}
% R_-(t) = 6\left(\frac{\ddot a}{a} + \frac{\dot a^2}{a^2}\right).
% \end{align}
Substituting the expression for $R_+$ from Eq.~\eqref{Rstarov} gives
\begin{align}
A(\mathcal{T})\,r_0 a + B(\mathcal{T}) - 1 = 2\alpha R_-(t),
\label{eq:staro1}
\end{align}
where $\mathcal{T}$ is to be regarded as a function of $t$ via Eq. \eqref{eq:defdotT}.

Next, we impose continuity of the normal derivative of the Ricci scalar. Using the matching condition~\eqref{eq:normder-R} and substituting the expression for $R_+$, we obtain
\begin{align}
A(\mathcal{T})+\frac{r_0 a A'(\mathcal{T})+B'(\mathcal{T})}{1-r_0\dot a}=0.
\label{eq:staro2}
\end{align}

Differentiating Eq. \eqref{eq:staro1} with respect to $t$ and using Eq. \eqref{eq:staro2}, we get 
\begin{align}
\label{eq:AT}
A(\mathcal{T})=-\frac{2 \alpha}{1-r_0\dot a} \dot{R}_{-}(t)\,,
\end{align}
and from Eq. \eqref{eq:staro1}
\begin{align}
\label{eq:BT}
B(\mathcal{T})=1+2 \alpha R_{-}(t)-A(\mathcal{T}) r_0 a\,.
\end{align}
Finally, using Eq. \eqref{eq:massStaro_full} and the continuity of the mass function (\ref{eq:M-sigma}, \ref{eq:derM-sigma}) gives
\begin{align}
&\frac{A(\mathcal{T})}{80\alpha}\,r_0^4a^4
+\frac{B(\mathcal{T})-1}{48\alpha}\,r_0^3a^3
+\frac{C(\mathcal{T})}{r_0a}+D(\mathcal{T})=\frac{r_0^3}{2}a\dot a^2\,,\\
&\frac{A(\mathcal{T})}{20\alpha}\,r_0^3a^3
+\frac{B(\mathcal{T})-1}{16\alpha}\,r_0^2a^2
-\frac{C(\mathcal{T})}{r_0^2a^2}
=r_0^2 \Bigl(\frac{\dot a^2}{2}+a \ddot{a}\Bigr),
\end{align}
which determine $C$ and $D$. Therefore the two curvature conditions determine
$A$ and $B$ on $\Sigma$, while the two mass conditions determine $C$ and $D$ on $\Sigma$. Hence all four functions, and thus the mass as well, are fixed on the boundary. 

Since $\dot {\mathcal{T}}=1/(1-r_0\dot a)$, one has that $\dot {\mathcal{T}}$ is well-defined whenever $1-r_0\dot a \neq 0$. Hence $\mathcal{T}$ is locally invertible, and there exists a local inverse $t=t(v)$. Therefore the relations obtained on $\Sigma$ determine $A(\mathcal{T}(t))$, $B(\mathcal{T}(t))$, $C(\mathcal{T}(t))$ and $D(\mathcal{T}(t))$ uniquely as functions of $t$, and thus determine $A(v)$, $B(v)$, $C(v)$ and $D(v)$ uniquely as functions of $v$ on the corresponding interval.

\end{proof}

\begin{remark}

The complete determination of the exterior solution in the
Starobinsky model does not contradict the Proposition
\ref{prop:non-uniqueness1}. In Proposition \ref{prop:non-uniqueness1}, the mass function $M(v,r_v)$ was treated as a generic unknown of a fourth-order system. In that case, the matching conditions fix the boundary $3$-jet, while the restricted field equations do not involve the derivatives $M_{,r_vvvv}$, $M_{,vvvv}$ so that the exterior is not uniquely determined.

By contrast, in the Proposition \ref{constrain-FE} one does not allow an arbitrary matter content. One has $T_{r_v r_v}=0$, and the $(1,1)$ field equation implies Eq. \eqref{Rstarov}. Thus the admissible exterior geometries are restricted to a much smaller class, and within this class the matching conditions are sufficient to determine the solution uniquely.

\end{remark}

\begin{remark}

Although the solution is uniquely determined, it is generically not physically viable. Indeed, if $A(v)\neq 0$, the Ricci scalar grows linearly with $r_v$ at large radius,
\begin{align}
R(v,r_v) \sim r_v,
\end{align}
and the corresponding matter variables obtained from the field equations grow unboundedly. 
\begin{align}
\rho \sim \mathcal{O}(r_v), \qquad p \sim \mathcal{O}(r_v), \qquad \mu \sim \mathcal{O}(r_v^2).
\end{align}
This indicates that the exterior is not sourced by localized matter but instead corresponds to a divergent medium extending to infinity. Imposing the condition $A(v)=0$ removes this divergence, but then Eq.~\eqref{eq:AT} implies that $R_-(t)$ must be constant, which is incompatible with a generic collapsing FLRW interior. Thus, while mathematically consistent solutions exist, they do not correspond to physically acceptable collapse scenarios.
\end{remark}

\subsection{Generic $f(R)$ Models}

We now extend the previous analysis beyond the specific case of the Starobinsky model to a general class of metric $f(R)$ theories. In particular, we assume that $f_{,R}$ is locally invertible along the branch determined by the matching hypersurface $\Sigma$. Denoting the inverse function by
\begin{align}
R = \Phi(f_{,R}), \qquad \Phi = (f_{,R})^{-1},
\end{align}
we analyze the consequences of the matching conditions and field equations for the exterior solution.

\begin{proposition}
For generic $f(R)$ models such that $f_{,R}$ is locally invertible along $\Sigma$ and $f_{,RR}(R_\Sigma)\neq 0$, the matching conditions determine the exterior solution uniquely on every interval where $\mathcal{T}(t)$ is locally invertible.
\end{proposition}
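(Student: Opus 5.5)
Mirroring the Starobinsky computation, we replace the explicit inversion $R=(f_{,R}-1)/(2\alpha)$ by the local inverse $\Phi$. By Proposition~\ref{constrain-FE}, the $(1,1)$ field equation forces $f_{,R}(R(v,r_v))=A(v)r_v+B(v)$ throughout the exterior. Local invertibility of $f_{,R}$ along the branch selected by $\Sigma$ therefore gives
\begin{align}
R_+(v,r_v)=\Phi\bigl(A(v)r_v+B(v)\bigr).
\end{align}
Equating this to Eq.~\eqref{eq:rplus} yields an ordinary differential equation in $r_v$ at each fixed $v$:
\begin{align}
\partial_{r_v}\bigl(r_v^2 M_{,r_v}\bigr)=\tfrac12 r_v^2\,\Phi\bigl(A(v)r_v+B(v)\bigr).
\end{align}
Integrating twice gives
\begin{align}
M(v,r_v)=\mathcal{I}[A,B](v,r_v)+\frac{C(v)}{r_v}+D(v),
\end{align}
where $\mathcal{I}$ is a fixed particular double integral, for instance
\begin{align}
\mathcal{I}=\int^{r_v}\frac{ds}{s^2}\int^{s}\tfrac12 u^2\Phi(Au+B)\,du .
\end{align}
The exterior is thus parametrized entirely by the four functions $A,B,C,D$ of $v$, and uniqueness reduces to showing that the matching data fix these four functions.

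Next I would impose the curvature junction conditions. Continuity of $R$ gives $f_{,R}(R_-(t))=A(\mathcal{T})\,r_0a+B(\mathcal{T})$. For the normal derivative condition \eqref{eq:normder-R}, note that $\partial R_+=\Phi'(f_{,R})\,\partial(Ar_v+B)$ with $\Phi'=1/f_{,RR}(R_\Sigma)$. This factor is finite and nonzero by hypothesis, so it can be divided out. The condition then becomes exactly Eq.~\eqref{eq:staro2},
\begin{align}
A(\mathcal{T})+\frac{r_0aA'(\mathcal{T})+B'(\mathcal{T})}{1-r_0\dot a}=0 .
\end{align}
Differentiating the continuity relation along $\Sigma$ (chain rule with $\dot{\mathcal{T}}=1/(1-r_0\dot a)$) and subtracting gives
\begin{align}
A(\mathcal{T})&=-\frac{f_{,RR}(R_-)\,\dot R_-}{1-r_0\dot a},\\
B(\mathcal{T})&=f_{,R}(R_-)-A(\mathcal{T})\,r_0a .
\end{align}
These generalize Eqs.~\eqref{eq:AT} and \eqref{eq:BT}.

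The mass conditions \eqref{eq:M-sigma} and \eqref{eq:derM-sigma} then give a linear system for $C(\mathcal{T})$ and $D(\mathcal{T})$ at $r_v=r_0a$, since $\mathcal{I}$ and $\partial_{r_v}\mathcal{I}$ are already known once $A,B$ are. The coefficient matrix is
\begin{align}
\begin{pmatrix}1/\mathcal{R} & 1\\ -1/\mathcal{R}^2 & 0\end{pmatrix},
\end{align}
with determinant $1/\mathcal{R}^2\neq0$. The solution is therefore explicit: $C$ is fixed by the derivative condition, then $D$ by the value condition.

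Finally, wherever $1-r_0\dot a\neq0$ we have $\dot{\mathcal{T}}\neq0$, so $\mathcal{T}$ has a local inverse $t(v)$. This converts the relations for $A(\mathcal{T}(t)),\dots,D(\mathcal{T}(t))$ into unique functions of $v$ on that interval, which proves the claim.

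The main obstacle is making the exterior representation legitimate. Three things must hold: $A(v)r_v+B(v)$ must stay in the domain where $\Phi$ is defined and single-valued, the integration constants must be the only freedom, and $\mathcal{I}$ must be well-defined (fixing base points of integration, which can be absorbed into $C$ and $D$). I would handle this by restricting to a neighborhood of $\Sigma$ in which $f_{,R}$ stays in the invertible branch, and noting that uniqueness is only claimed locally. The cancellation of $\Phi'$ in the normal-derivative condition is the other key point, and it is exactly where $f_{,RR}(R_\Sigma)\neq0$ is used.
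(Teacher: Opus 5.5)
Your proposal is the paper's proof, step for step. It uses the $(1,1)$ equation with the local inverse $\Phi$, double integration in $r_v$ leaving $C(v)/r_v+D(v)$, and the cancellation of $1/f_{,RR}$ to reach Eq.~\eqref{eq:generic2}. It then derives the same formulas for $A$ and $B$, solves a linear system for $C,D$ (your determinant $1/\mathcal{R}^2$ makes the paper's ``two linear equations'' precise), and finishes with local invertibility of $\mathcal{T}$.

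There is one slip. Since $R_+=\frac{2}{r_v^3}\partial_{r_v}\bigl(r_v^2M_{,r_v}\bigr)$, the ODE is $\partial_{r_v}\bigl(r_v^2M_{,r_v}\bigr)=\tfrac12 r_v^3\,\Phi(Ar_v+B)$, not $\tfrac12 r_v^2\,\Phi$. The inner integrand of $\mathcal{I}$ should therefore be $\tfrac12u^3\Phi(Au+B)$, as you can check against Eq.~\eqref{eq:massStaro_full}. This leaves the homogeneous part $C/r_v+D$, and hence your uniqueness argument, unchanged.
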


\begin{proof}
From the $(1,1)$ component of the field equations~\eqref{eq:fr_field}, we get Eq.~\eqref{eq:derf-form}.
% \begin{align}
% f_{,R}(R(v,r_v)) = A(v)\,r_v + B(v),
% \end{align}
% where $A(v)$ and $B(v)$ are arbitrary functions of $v$.
Since $f_{,R}$ is assumed to be locally invertible, we can express the Ricci scalar as
\begin{align}
R(v,r_v) = \Phi\bigl(A(v)\,r_v + B(v)\bigr).
\label{eq:Rgeneric}
\end{align}

On the other hand, for the generalized Vaidya metric, the Ricci scalar satisfies Eq.~\eqref{eq:rplus}.
% \begin{align}
% R(v,r_v) = \frac{2}{r_v^2}\left(r_v M_{,r_v r_v} + 2M_{,r_v}\right).
% \end{align}
Substituting Eq.~\eqref{eq:Rgeneric} into this relation, we obtain a second-order differential equation in $r_v$ for the mass function $M(v,r_v)$:
\begin{align}
\frac{2}{r_v^2}\left(r_v M_{,r_v r_v} + 2M_{,r_v}\right)
=
\Phi\bigl(A(v)\,r_v + B(v)\bigr).
\end{align}
For each fixed value of $v$, this is an ordinary differential equation in $r_v$. Integrating twice with respect to $r_v$, we obtain
\begin{align}
M(v,r_v) = M_f(v,r_v;A,B) + \frac{C(v)}{r_v} + D(v),
\end{align}
where $M_f$ is a particular solution determined by $A(v)$ and $B(v)$. $C(v)$ and $D(v)$ are arbitrary functions of $v$ arising as integration functions.

We now determine the functions $A(v)$ and $B(v)$ using the matching conditions on $\Sigma$. 
% On the hypersurface $\Sigma$, we have
% \begin{align}
% r_v = R(t) = r_0 a(t), \qquad v = T(t).
% \end{align}
The continuity of the Ricci scalar
% \begin{align}
% R_+(T(t),r_0 a(t)) = R_-(t),
% \end{align}
% where $R_-(t)$ is given in Eq.~\eqref{eq:rminus}.
% \begin{align}
% R_-(t) = 6\left(\frac{\ddot a}{a} + \frac{\dot a^2}{a^2}\right).
% \end{align}
and using Eq.~\eqref{eq:Rgeneric}, we obtain
\begin{align}
\Phi\bigl(A(\mathcal{T})\,r_0 a + B(\mathcal{T})\bigr) = R_-(t).
\end{align}
Applying $f_{,R}$ to both sides yields
\begin{align}
A(\mathcal{T})\,r_0 a + B(\mathcal{T}) = f_{,R}(R_-(t)).
\label{eq:generic1}
\end{align}

Next, we impose the continuity of the normal derivative of the Ricci scalar~\eqref{eq:junc3}.
% \begin{align}
% \left.n^\mu \nabla_\mu R_+\right|_{\Sigma} = 0.
% \end{align}
Using the chain rule,
\begin{align}
\nabla_\mu R_+ = \Phi'(f_{,R})\,\nabla_\mu f_{,R}
= \frac{1}{f_{,RR}(R_+)}\,\nabla_\mu f_{,R},
\end{align}
and since $f_{,RR}(R_\Sigma)\neq 0$, this condition is equivalent to
\begin{align}
\left.n^\mu \nabla_\mu f_{,R}\right|_{\Sigma} = 0.
\end{align}

Using Eq.~\eqref{eq:derf-form} and the expression for the normal vector, one finds that this condition reduces to
\begin{align}
A(\mathcal{T})+\frac{r_0 a A'(\mathcal{T})+B'(\mathcal{T})}{1-r_0\dot a}=0.
\label{eq:generic2}
\end{align}

Eqs.~\eqref{eq:generic1} and \eqref{eq:generic2} form a closed system that determines $A(\mathcal{T}(t))$ and $B(\mathcal{T}(t))$ uniquely in terms of $a(t)$ and its derivatives. In particular, differentiating Eq.~\eqref{eq:generic1} with respect to $t$ and using Eq.~\eqref{eq:generic2}, one obtains
\begin{align}
\label{eq:Ageneric}
A(\mathcal{T}) = -\frac{f_{,RR}(R_-)}{1-r_0\dot a}\,\dot R_-(t),
\end{align}
and therefore
\begin{align}
B(\mathcal{T}) = f_{,R}(R_-(t)) - A(\mathcal{T})\,r_0 a.
\end{align}
Thus, $A(\mathcal{T}(t))$ and $B(\mathcal{T}(t))$ are uniquely determined on $\Sigma$. Since $\dot{\mathcal{T}} \neq 0$ whenever $1-r_0\dot a \neq 0$, the function $\mathcal{T}(t)$ is locally invertible, and therefore $A(v)$ and $B(v)$ are uniquely determined as functions of $v$.

The functions $C(v)$ and $D(v)$ are determined by the matching conditions on the mass function and its radial derivative, Eqs.~\eqref{eq:M-sigma} and \eqref{eq:derM-sigma}, which provide two linear equations for $C(\mathcal{T})$ and $D(\mathcal{T})$. These equations uniquely determine $C(\mathcal{T}(t))$ and $D(\mathcal{T}(t))$.
Finally, since $\mathcal{T}(t)$ is locally invertible, the functions $C(v)$ and $D(v)$ are uniquely determined as functions of $v$. The remaining field equations then determine only the fluid variables
$\mu$, $\rho$ and $p$.
\end{proof}

\begin{remark}
The uniqueness obtained here does not contradict the earlier non-uniqueness results. In the general analysis, the mass function $M(v,r_v)$ was treated as an arbitrary function subject only to the matching conditions and the structure of the field equations, leaving higher-order derivatives unconstrained. In contrast, the restriction \eqref{eq:derf-form} reduces the space of admissible solutions to a finite-dimensional functional family. Within this restricted class, the matching conditions are sufficient to determine the solution uniquely.
\end{remark}

This result shows that, for a broad class of $f(R)$ models, the apparent freedom of the generalized Vaidya exterior is completely eliminated once the energy-momentum tensor is described by Eqs.~(\ref{EMT1}-\ref{EMT3}).

\begin{corollary}
\label{cor:asym}
Given~\eqref{eq:derf-form}  
\begin{align}
f_{,R}(R(v,r_v))=A(v)\,r_v+B(v),
\end{align}
with $f_{,R}>0$, and assume that the solution extends to arbitrarily large $r$. Then either $A(v)=0$ with $B(v)>0$, or $A(v)>0$. In the nontrivial case $A(v)\neq 0$, one has
\begin{align}
f_{,R}(R(v,r))\to +\infty
\qquad (r\to\infty).
\end{align}
Assuming that the inverse branch $\Phi=(f_{,R})^{-1}$ can be continued along the solution for sufficiently large $f_{,R}$ (i.e., on the range of $f_{,R}(R(v,r))$ for large $r$), the asymptotic behavior of the Ricci scalar
\begin{align}
R(v,r)=\Phi(f_{,R}(R(v,r)))
\end{align}
is determined by the limit $f_{,R}\to+\infty$, leading to the following possibilities:
\begin{enumerate}
\item If $f_{,R}$ is bounded from above along the corresponding inverse branch $\Phi$ (equivalently, on the curvature range attained by the solution), then the relation
\begin{align}
F(v,r)=f_{,R}(R(v,r))
\end{align}
cannot hold for arbitrarily large $r$, and therefore no global extension to arbitrarily large $r$ exists.

\item If $f_{,R}(R)\to+\infty$ only when $R\to\infty$, then
\begin{align}
R(v,r)\to\infty
\quad\text{as}\quad r\to\infty.
\end{align}

\item A finite asymptotic limit
\begin{align}
R(v,r)\to R_*,
\qquad
R_*<\infty,
\end{align}
is possible only if
\begin{align}
f_{,R}(R)\to+\infty
\quad\text{as}\quad R\to R_*.
\end{align}
\end{enumerate}
In particular, for generic viable $f(R)$ models in which $f_{,R}$ remains finite at all finite curvature, the branch $A(v)\neq 0$ cannot admit a global extension with finite asymptotic curvature.
\end{corollary}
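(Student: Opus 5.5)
The plan is to work on a fixed null slice $v=\text{const}$, where Eq.~\eqref{eq:derf-form} makes $F(r)\equiv f_{,R}(R(v,r))=A(v)\,r+B(v)$ an affine function of $r$. The sign dichotomy follows from positivity. Suppose $f_{,R}>0$ holds for all sufficiently large $r$. If $A(v)<0$, then $F(r)$ becomes negative once $r>-B(v)/A(v)$, which is a contradiction. If $A(v)=0$, then $F\equiv B(v)$, and positivity forces $B(v)>0$. In the remaining case $A(v)>0$, we get $F(r)\to+\infty$ linearly as $r\to\infty$. This settles the first part of the statement.

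For the asymptotic classification, I would use the standing assumption that the inverse branch $\Phi=(f_{,R})^{-1}$ can be continued along the solution. Then $R(v,r)=\Phi(F(r))$ holds for all large $r$. Along a single inverse branch $f_{,R}$ is monotone, so $R(v,r)$ is monotone in $r$ and has a limit $R_*\in[-\infty,+\infty]$.

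The three cases then go as follows.

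\textbf{Case 1.} Suppose $f_{,R}$ is bounded above by some $K$ on the curvature range reached along the branch. Then any $r$ with $A(v)r+B(v)>K$ has no preimage under $f_{,R}$, so the identity $F=f_{,R}(R)$ fails there. Hence no extension to arbitrarily large $r$ exists.

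\textbf{Cases 2 and 3.} Suppose $f_{,R}$ is unbounded along the branch. Since $F\to+\infty$, we need $f_{,R}(R(v,r))\to+\infty$ as $R(v,r)\to R_*$.
\begin{itemize}
\item If $R_*$ is finite, then $f_{,R}$ must blow up at the finite value $R_*$. This is Case~3.
\item If $f_{,R}$ diverges only as $R\to\infty$, then $R_*$ cannot be finite. Moreover $R_*=-\infty$ is excluded as well, since the divergence occurs only at $+\infty$ by hypothesis. Hence $R\to\infty$. This is Case~2.
\end{itemize}

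The step I expect to need the most care is justifying monotonicity and the existence of the limit, so that the argument does not merely extract subsequences. I would handle this with local invertibility: by continuity, $f_{,RR}\neq0$ along the continued branch, so $\Phi$ is strictly monotone and continuous, and $R=\Phi\circ F$ inherits this from the affine map $F$.

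The final sentence of the statement then follows directly. If $f_{,R}$ is finite at every finite curvature, Case~3 is impossible. The branch $A(v)\neq0$ therefore either fails to extend globally (Case~1) or has $R\to\infty$ (Case~2). In neither case is there a global extension with finite asymptotic curvature.
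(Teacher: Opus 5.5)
Your proposal is correct and follows the paper's route: positivity of the affine function $A(v)r+B(v)$ gives the sign dichotomy, and the three cases are then read off from how $f_{,R}$ behaves on the curvature range the solution reaches. Your extra step establishing a genuine limit $R_*$ is an improvement the paper omits, but it is better justified directly. Since $f_{,R}(R(v,r))=A(v)r+B(v)$ is strictly increasing for $A(v)>0$, the continuous map $r\mapsto R(v,r)$ is injective and hence strictly monotone; you should argue this way rather than claim $f_{,RR}\neq 0$ along the whole branch ``by continuity,'' which continuity alone does not give.
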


\begin{proof}
If the branch extends to arbitrarily large $r$ and $f_{,R}>0$, then either $A(v)=0$ (in which case $f_{,R}(v,r)\equiv B(v)>0$), or $A(v)\neq 0$. In the nontrivial case $A(v)\neq 0$, the positivity of $f_{,R}$ for arbitrarily large $r$ forces $A(v)>0$, since otherwise
\begin{align}
f_{,R}(v,r)=A(v)\,r+B(v)\to -\infty
\end{align}
as $r\to\infty$, contradicting $f_{,R}>0$. Hence
\begin{align}
f_{,R}(v,r)\sim A(v)\,r,
\qquad
f_{,R}(v,r)\to +\infty
\quad\text{as}\quad r\to\infty.
\end{align}
If $f_{,R}$ is bounded from above along the corresponding inverse branch $\Phi$ (equivalently, on the curvature range attained by the solution), then $f_{,R}(R(v,r))$ cannot match the unbounded quantity $f_{,R}(v,r)$ for arbitrarily large $r$. Hence no global extension exists.

If $f_{,R}$ becomes unbounded only in the limit $R\to\infty$, then the condition $f_{,R}(v,r)\to+\infty$ implies
\begin{align}
R(v,r)\to\infty
\quad\text{as}\quad r\to\infty.
\end{align}

Finally, if one requires
\begin{align}
R(v,r)\to R_*,
\quad
R_*<\infty,
\end{align}
while
\begin{align}
f_{,R}(v,r)\to+\infty,
\end{align}
then necessarily
\begin{align}
f_{,R}(R)\to+\infty
\quad\text{as}\qquad R\to R_*.
\end{align}
This proves the three alternatives. The final statement follows immediately: if \(f_{,R}\) stays finite at every finite curvature, then the third case is excluded, so for the branch with $A(v)\neq 0$, no global extension with finite asymptotic curvature is possible.
\end{proof}

\begin{remark}
The three alternatives in Corollary~\ref{cor:asym} can be illustrated by simple examples. 

For the first case, one may take
\begin{align}
f(R)=R+\frac{\mu}{R},
\qquad
\mu>0,
\end{align}
for which
\begin{align}
f_{,R}(R)=1-\frac{\mu}{R^2}.
\end{align}
On the physical branch $f_{,R}>0$ with $R>0$, one must have
\begin{align}
R>\sqrt{\mu},
\end{align}
and therefore
\begin{align}
0<f_{,R}(R)<1.
\end{align}
Thus $f_{,R}$ is bounded from above and cannot match a branch satisfying
\begin{align}
f_{,R}(v,r)=A(v)\,r+B(v)\to+\infty
\quad\text{as}\quad r\to\infty.
\end{align}
This realizes the first alternative.

For the second case, one may take the Starobinsky model which has already been discussed.

Finally for the third case, one may take
\begin{align}
f(R)=-R_*\ln\!\left(1-\frac{R}{R_*}\right),
\qquad
R<R_*,
\end{align}
with $R_*>0$. Then
\begin{align}
f_{,R}(R)=\frac{1}{1-R/R_*},
\end{align}
so that $f_{,R}(R)\to+\infty$ as $R\to R_*<\infty$. The inverse relation is
\begin{align}
R=R_*\left(1-\frac{1}{f_{,R}}\right).
\end{align}
Therefore, if
\begin{align}
f_{,R}(v,r)=A(v)\,r+B(v)\to+\infty,
\end{align}
one finds
\begin{align}
R(v,r)\to R_*.
\end{align}
This realizes the third alternative.

These examples are meant only as simple prototypes of the three possible asymptotic behaviors. In particular, the third example is not viable in the usual sense precisely because $f_{,R}$ diverges at finite curvature.
\end{remark}

\begin{remark}
Another illustration of the third case in Corollary \ref{cor:asym} is provided by the power-law model \cite{Clifton:2005aj}
\begin{align}
f(R)=R^{1+\delta}.
\end{align}
In this case,
\begin{align}
f_{,R}(R)=(1+\delta)R^\delta.
\end{align}
Hence, on the branch singled out by Proposition \ref{constrain-FE},
\begin{align}
(1+\delta)R^\delta=A(v)\,r+B(v).
\end{align}
In the nontrivial case $A(v)\neq 0$, one obtains asymptotically
\begin{align}
R(v,r)\sim \left(\frac{A(v)}{1+\delta}\right)^{1/\delta} r^{1/\delta}.
\end{align}
If one restricts to
\begin{align}
-1<\delta<0,
\end{align}
then
\begin{align}
R(v,r)\to 0
\qquad (r\to\infty),
\end{align}
while
\begin{align}
f_{,R}(R)=(1+\delta)R^\delta \to +\infty
\qquad \text{as} \qquad R\to 0^+.
\end{align}
Therefore this model realizes precisely the third case of Corollary \ref{cor:asym}, with the finite asymptotic value
\begin{align}
R_*=0.
\end{align}
The corresponding mass does not have a single universal large-$r$ behavior throughout the whole interval $-1<\delta<0$. Using Eq.~\eqref{eq:rplus} one finds that, for generic $\delta\neq -1/4,-1/3$, the mass contains the particular term
\begin{align}
M_{\text{part}}(v,r)\propto r^{3+1/\delta},
\end{align}
but this term is asymptotically dominant only when
\begin{align}
-1<\delta<-\frac13.
\end{align}
For
\begin{align}
\delta=-\frac13,
\end{align}
the behavior becomes logarithmic,
\begin{align}
M(v,r)\sim \ln r,
\end{align}
while for
\begin{align}
-\frac13<\delta<0
\end{align}
the particular term decays and the mass approaches a finite limit, up to subleading corrections. Thus the relevant asymptotic feature is not a universal divergence of $M$, but rather the fact that the curvature tends to the finite value $R_*=0$ while $f_{,R}$ diverges there.\\
However, this branch is physically excluded. Indeed, for
\begin{align}
-1<\delta<0,
\end{align}
one has
\begin{align}
f_{,RR}(R)=\delta(1+\delta)R^{\delta-1}<0,
\end{align}
so the Dolgov--Kawasaki stability condition \eqref{eq:DK_condition} is violated. Moreover, Clifton and Barrow \cite{Clifton:2005aj} found that observational constraints confine the allowed values to a tiny neighborhood of general relativity, namely
\begin{align}
0\le \delta < 7.2\times 10^{-19},
\end{align}
thereby excluding the negative-$\delta$ branch altogether. Hence, although the model provides a concrete example of case~3, it is not physically viable.
\end{remark}

\medskip

The corollary shows that the asymptotic behavior of the exterior solution is entirely controlled by the large-$f_{,R}$ behavior of the inverse function $\Phi=(f_{,R})^{-1}$. In particular, for viable $f(R)$ models in which $f_{,R}$ remains finite at all finite curvature, the branch $A(v)\neq 0$ necessarily leads either to divergent curvature or to the absence of a global exterior solution.

This result has a direct physical interpretation. Since $f_{,R}$ represents the scalar degree of freedom in metric $f(R)$ gravity, the relation~\eqref{eq:derf-form} forces the scalaron to grow without bound in the nontrivial branch $A(v)\neq 0$. This, in turn, leads to an unbounded growth of curvature and matter variables at large radius, indicating that the exterior is not sourced by a localized distribution but rather by a divergent medium extending to infinity. Consequently, this branch cannot describe a physically acceptable exterior spacetime.

\begin{remark}
To obtain a physically acceptable exterior, one is naturally led to consider the branch $A(\mathcal{T})=0$. Assuming $f_{,RR}(R_-(t))\neq 0$, Eq.~\eqref{eq:Ageneric} implies that the interior Ricci scalar is constant, $R_-(t)=R_0$. This excludes dust collapse, since Eq.~\eqref{eq:trace_DK} then requires $\rho=\text{constant}$, implying a static configuration with $\dot a=0$. 

However, this does not rule out collapse altogether. Instead, it restricts the interior to a constant-curvature branch. In this case, $f_{,R}(R_0)$ is constant, the derivative terms in the field equations vanish, and the dynamics reduce to
\begin{align}
f_{,R}(R_0)\,G^-_{\mu\nu}+\frac12\bigl(f_{,R}(R_0)R_0-f(R_0)\bigr)g^-_{\mu\nu}
=\kappa\,T^-_{\mu\nu},
\end{align}
i.e., Einstein equations with an effective cosmological constant (for $f_{,R}(R_0)\neq 0$).

The trace Eq.~\eqref{eq:trace_DK} gives
\begin{align}
\kappa\,T^- = f_{,R}(R_0)R_0-2f(R_0),
\end{align}
so the trace of the interior energy--momentum tensor is constant. Under FLRW symmetry, this implies
\begin{align}
T^-=-\rho+3p=T_0,
\end{align}
leading to the equation of state
\begin{align}
p=\frac{\rho+T_0}{3}.
\end{align}
Using energy--momentum conservation,
\begin{align}
\dot{\rho}+3H(\rho+p)=0\,,
\end{align}
it follows that
\begin{align}
\dot{\rho}+H(4\rho+T_0)=0,
\end{align}
and therefore
\begin{align}
\rho(a)=\frac{C}{a^4}-\frac{T_0}{4},
\quad
p(a)=\frac{C}{3a^4}+\frac{T_0}{4},
\end{align}
for some constant $C$. Thus, the admissible matter content consists of a radiation-like component plus an effective vacuum contribution.

Therefore, the condition $A(\mathcal{T})=0$ does not forbid collapse in general; it restricts the interior matter sector to constant-trace sources.
\end{remark}

The above analysis reveals a strong structural constraint on the scalar degree of freedom in metric $f(R)$ gravity when the exterior geometry is restricted to the generalized Vaidya class. In particular, the relation~\eqref{eq:derf-form} implies that the scalaron is at most linear in the areal radius along each null slice. As a result, the scalar field cannot develop a localized radial profile and is instead forced into a rigid configuration determined entirely by $A(v)$ and $B(v)$.

This has direct physical consequences. In a generic scalar--tensor description, the scalaron behaves as a propagating degree of freedom governed by a second-order differential equation. By contrast, the generalized Vaidya ansatz suppresses this dynamics and admits only two qualitatively distinct behaviors:

\begin{itemize}
\item If $A(v)\neq 0$, the scalaron grows linearly with radius, leading to divergent curvature and matter variables at large distances. The exterior is then sourced by a non-local, divergent medium rather than a localized distribution.

\item If $A(v)=0$, the scalaron becomes spatially constant, freezing its dynamics. The matching conditions then force the interior Ricci scalar to be constant, restricting the interior to a constant-curvature sector and excluding generic collapsing solutions.
\end{itemize}

In both cases, the scalar degree of freedom fails to behave as a physically acceptable dynamical field in the exterior region. This reflects a deeper incompatibility between the generalized Vaidya ansatz and the propagation of the scalaron in $f(R)$ gravity.

We therefore conclude that the obstruction to OS-type collapse in metric $f(R)$ gravity is not merely a consequence of the junction conditions, but arises from the restricted dynamical structure of the scalar degree of freedom in null-radiating geometries. This suggests that physically viable collapse scenarios, if they exist, require either more general exterior geometries or alternative matter configurations.

\section{Conclusion}

In this work, we have revisited the OS collapse problem in metric $f(R)$ gravity, taking into account the additional junction conditions implied by the fourth-order nature of the field equations. In contrast to GR, where matching requires only the continuity of the induced metric and extrinsic curvature, metric $f(R)$ gravity further imposes the continuity of the Ricci scalar and of its normal derivative. These additional conditions severely constrain admissible collapse scenarios. In particular, we have shown that a homogeneous FLRW interior cannot, in general, be matched to a Ricci-flat exterior whenever $f_{,RR}\neq 0$, thereby obstructing the standard OS construction.

Allowing for a generalized Vaidya exterior enlarges the space of admissible configurations. In the absence of restrictions on the exterior matter sector, the matching conditions determine only the boundary data of the mass function, leaving its extension into the bulk undetermined. This leads to a genuine non-uniqueness of exterior solutions and shows that, at a formal level, the obstruction to homogeneous collapse can be relaxed by considering more general exteriors.

However, this degeneracy is removed once the exterior matter content is restricted to the generalized Vaidya form. In that case, the field equations impose a strong structural constraint, Eq.~\eqref{eq:derf-form}, which drastically reduces the functional freedom of the exterior solution. For the Starobinsky model, this condition is sufficient to determine the exterior mass function uniquely in terms of the interior scale factor. More generally, for locally invertible $f_{,R}$ with $f_{,RR}\neq 0$, the matching data uniquely fix the exterior solution on intervals where the boundary map is locally invertible.

Despite this apparent mathematical resolution, the resulting solutions are not physically viable in generic $f(R)$ models. The branch with $A(v)\neq 0$ leads to an unbounded growth of the scalar degree of freedom, resulting in divergent curvature and non-localized matter distributions at large radius. On the other hand, the branch $A(v)=0$ suppresses the radial dynamics of the scalaron and forces the interior onto a constant-curvature sector, thereby excluding nontrivial dust collapse and restricting the admissible matter content to constant-trace sources.

These results indicate that the obstruction to OS-type collapse in metric $f(R)$ gravity is not merely a consequence of over-constrained junction conditions, but reflects a deeper incompatibility between homogeneous collapse and the allowed configurations of the scalar degree of freedom in null-radiating geometries. In particular, the generalized Vaidya ansatz constrains the scalaron in a way that prevents it from supporting a physically acceptable exterior configuration.

We therefore conclude that, within the restricted generalized Vaidya matter sector considered here, the OS dust collapse problem in metric $f(R)$ gravity does not admit a physically viable resolution. This suggests that any consistent description of gravitational collapse in these theories may require a more general class of exterior geometries, in which the scalar degree of freedom is treated dynamically rather than being constrained by an effective fluid description.

\vspace{-0.5 cm}
\section*{Acknowledgments}

The work of R.G. is supported by ANID FONDECYT Regular No. 1220965 (Chile). S.C. acknowledges the IUCAA for providing the facility and support under the visiting associateship program. Acknowledgment is also given to the Vellore Institute of Technology for the financial support through its Seed Grant (No. SG20230027), 2023.

\bibliography{bibliography}

\end{document}